\newcommand{\E}{\mathbb{E}}
\newcommand{\R}{\mathbb{R}}
\newtheorem*{remark}{Remark}
\begin{document}
%\bibliographystyle{ACM-Reference-Format}
% Title portion. Note the short title for running heads

\author{Tim Hellemans}
\author{Benny Van Houdt}
%\orcid{?}
\affiliation{%
  \institution{University Of Antwerp}
  \streetaddress{Middelheimlaan 1}
  \city{Antwerp}
  \postcode{2000}
  \country{Belgium}}
  
  \title[]{Improved Load Balancing in Large Scale Systems using Attained Service Time Reporting}

\begin{abstract}
Our interest lies in load balancing jobs in large scale systems consisting of multiple dispatchers and FCFS servers. In the absence of any information on job sizes, dispatchers 
typically use queue length information reported by the servers to assign incoming jobs.
When job sizes are highly variable, using only queue length information is clearly suboptimal and performance can be improved if some indication can be provided to the dispatcher
about the size of an ongoing job. In a FCFS server measuring the attained service time of the ongoing job is easy and servers can therefore report this attained service time together with
the  queue length when queried by a dispatcher.

In this paper we propose and analyse a variety of load balancing policies that exploit
both the queue length and attained service time to assign jobs, as well as policies 
for which only the attained service time of the job in service is used.
We present a unified analysis for all these policies in a large scale system under the usual
asymptotic independence assumptions.  The accuracy of the proposed analysis is illustrated using simulation.

We present extensive numerical experiments which clearly indicate that a significant improvement in waiting (and thus also in response) time may be achieved by using the attained service time information on top of the queue length of a server. Moreover, the policies which do not make use of the queue length still provide an improved waiting time for moderately loaded systems.
\end{abstract}

\maketitle

\section{Introduction}
Load balancing plays a vital role to achieve a low latency in large scale clusters. It was proven in \cite{winston77} that Join the Shortest Queue (JSQ) is the optimal policy to distribute jobs in a system with $N$ identical servers and exponential job sizes if jobs must be assigned immediately and
no jockeying between servers is allowed. Ever since, the JSQ policy has often been referred to as the golden standard for load balancers. As $N$ grows large, the overhead created by probing \textit{all} servers at every arrival becomes too large. Therefore the Shortest Queue $d$ (SQ($d$)) policy was introduced and analysed in various papers \cite{mitzenmacher2001power, aghajani2018pde, vasantam2018mean, vvedenskaya3}. A survey of recent advances can be found in \cite{van2018scalable}. However, the workload in many real systems consists of a mix of many short jobs
and some large jobs, where the large jobs contribute a significant part of the total workload (see e.g.~\cite{Sparrow,delgado2016job, delgado2015hawk}). When all servers have the same queue length, the JSQ policy simply assigns the incoming job arbitrarily, while it is better to assign the job to a server which is less likely to be serving a large job. Moreover, selecting a server with one long job may be worse than selecting a server with multiple small jobs. 

Recently the Least Loaded $d$ policy (LL($d$)) was analysed in \cite{hellemans2018power, ayesta2018unifying}, the LL($d$) policy assigns incoming jobs to a server which has the least amount of work left
among $d$ randomly selected servers. This allows one to improve the SQ($d$) policy significantly. The drawback of this policy is however that it only works if job size information
is available (which is mostly not the case) or when using a mechanism like late-binding (which brings significant overhead as jobs are not assigned immediately). While a server is typically unaware of its remaining workload, it can measure (up to some accuracy $\Delta$) the time it has spent processing the job(s) in service. This is especially
true in FCFS servers. As jobs that have been in service for a substantial amount of time
 are highly likely to be long jobs that potentially have a long residual service time, using the attained service time to assign jobs to servers
may improve performance.  

In this paper we propose a collection of load balancing policies that exploit both queue length
and attained service time (up to some granularity $\Delta$) information reported by the servers to assign jobs. Note that such policies do not require any knowledge of the size of incoming jobs or the job size distribution. We develop a unified analysis which may be used to analyse these load balancing policies in a large scale system under the usual asymptotic independence assumptions. Our main observation is that for small to moderate system loads and $d$ sufficiently large, the improvement from using the attained service time information is substantial, while this improvement decreases as the system becomes critically loaded. For example, even policies which solely rely on the attained service time of the job in service may outperform the SQ($d$) policy with $d=5$ for a large range of arrival rates (see Figure \ref{fig3a}).

As we may not aspire to approach the performance of the LL($d$) policy (as it is impossible to predict the exact workload using only the attained service time of one job and the queue length), we set ourselves a different goal. We define the Least Expected Workload policy (LEW($d$)) as the policy which assigns any incoming job to the server which has the least expected work left at its queue among $d$ randomly selected servers. Note that the LEW($d$) policy uses knowledge of the job size distribution to estimate the residual service time. We find that many of our policies that do not require such knowledge are able to achieve performance which is similar to that of the LEW($d$) policy.

When using (an indication of) the job sizes, there are multiple things one can do to improve waiting times. The Size Interval Task Assignment policy (also known as the SITA policy, see e.g.~\cite{harchol1999choosing, bachmat2010analysis}) distributes incoming jobs based on their size. In order to implement this policy, one needs to know the size of an incoming job. An alternative policy which does not require this information  is the Task Assignment based on Guessing Size policy (also known as the TAGS policy, see e.g.~\cite{harchol2000task, bachmat2020analysis}). For this policy, one sets cutoffs (which depend on the job size distribution) and a job migrates between servers when its service time exceeds these cutoffs. The policies considered in this paper do not require knowledge of the job size distribution and jobs do not migrate between servers after being assigned to a server by the dispatcher.

In our analysis we assume incoming jobs are Phase Type Distributed (further denoted by PH distributed). PH distributions are distributions with a modulating finite state background Markov chain \cite{latouche1} and any general positive-valued distribution can be approximated arbitrary close with a PH distribution. Further, various fitting tools are available online 
for PH distributions (e.g., \cite{panchenko1,Kriege2014}).

The main contributions of this paper are as follows:
\begin{enumerate}
\item 
We propose various load balancing policies that exploit queue length and attained service
time information reported by the servers. We demonstrate that all of our policies achieve
a significant reduction for the average waiting time of a job compared to SQ$(d)$ under low to moderate workloads, with a performance that is often close to the LEW($d$) policy. For some policies the waiting time is reduced for all workloads.
\item We present a unified analysis which is applicable for all policies under consideration
(and other variations thereof). This analysis may be of independent interest as it provides
a means to assess the performance of an M/PH/1 queue with queue length and attained service
time (up to some granularity $\Delta$) dependent arrival rates.
\item We validate the accuracy of the asymptotic independence assumption used in the analysis
using simulation experiments. 
\end{enumerate}

The paper is structured as follows. In Section \ref{sec:model} we formally define the model of interest. In Section \ref{sec:examples} we present seven different policies which we study throughout the paper. In Section \ref{sec:analysis} we present the method used to analyse our
load balancing policies. In Section \ref{sec:finiteAccuracy} we verify our analysis by means of simulation of finite systems. Section \ref{sec:numExperiments} consists of extensive numerical experimentation investigating the impact of all parameters in our proposed model. Finally, we conclude in Section \ref{sec:conclusion}.

\section{Model Description}\label{sec:model}
The model we consider consists of $N$ homogeneous servers (with $N$ large) which all process jobs using FCFS scheduling. We assume jobs arrive according to a Poisson $\lambda N$ process, these arrivals may occur to multiple dispatchers. We assume job sizes have a Phase Type 
distribution with representation $(\alpha, A)$. We use the notation $\mu = -A \textbf{1}$,
where $\textbf{1}$ is a column vector with all its entries equal to one, and denote by $m$ the number of phases of the job size distribution. Hence, the probability
of having a job size smaller than or equal to $x$ is given by $1- \alpha e^{Ax} \textbf{1}$.
Furthermore, we assume w.l.o.g.~that the mean job size is equal to one.

We pick a $\Delta>0$ and $r> 0$ and define $c_k=k\cdot \Delta$, for $1 \leq k\leq r$, and set
$c_0 = 0$ and $c_{r+1} = \infty$. We say a job is in layer$-k$ if its attained service time satisfies $a \in (c_{k-1}, c_{k}]$
and in layer$-0$ when the server is idle. 
When a server is queried by a dispatcher for queue length and attained service time information,
the server reports its queue length $\ell$ and the layer $k$ in which the attained service time of the job in process lies. This corresponds to stating that the servers measure the attained service time up to some granularity $\Delta$. Whenever an arrival occurs to a dispatcher, the dispatcher picks $d$ servers at random and queries these $d$ servers. The dispatcher then uses some policy based on the ($k,\ell$) values reported by the $d$ servers to assign the incoming job to one of these servers. We note that our analysis approach actually applies for any
set of threshold values $c_k$ such that $0=c_0 < c_1 < \dots < c_r < c_{r+1} = \infty$.

\section{Load Balancing Policies} \label{sec:examples}
In order to define our policies, we first define $\mathcal{R}_n=[0,\infty)^n$ with the lexicographic order. All our policies are based on the same basic idea, from every server, the dispatcher receives the layer and queue length information coded as $(k, \ell) \in \mathbb{N} \times \mathbb{N}$. The dispatcher maps the information $(k,\ell)$ to some value $\xi(k, \ell) \in \mathcal{R}_n$ which is interpreted as a measure for the \textquotedblleft aversion\textquotedblright\ of the chosen server. The incoming job is then assigned to the server for which the $\xi(k,\ell)$ 
value is the smallest (amongst the $d$ chosen servers), with ties being broken uniformly at random.
\begin{example}
Some basic examples of policies are random routing, which corresponds to picking $\xi(k,\ell)=0$ for all $(k,\ell)$ and SQ($d$) for which one sets $\xi(k,\ell) = \ell$.
\end{example}
We now introduce a number of load balancing policies that are all described by defining $\xi:\mathbb{N} \times \mathbb{N} \rightarrow \mathcal{R}_n$. For our policies, we always have $\xi(0,0)=(0)_{i=1}^n$ and $\xi(k,\ell) \neq (0)_{i=1}^n$ if $\ell \geq 1$. This ensures that we always assign to idle queues if possible. As jobs with larger attained service times are more
likely to be large jobs with a potentially large residual service time, $\xi$ is always chosen to be non-decreasing in $k$.

Throughout, we assume that $\xi$ is chosen such that our model remains stable for $\lambda < 1$. More often than not, it suffices to note that the load balancing policy outperforms the random routing policy.

\subsection{SQ($d$) with Runtime based Tie Breaking (SQ($d$)-RTB)}\label{sec:RTB}
This policy mainly relies on the queue length information, but in case multiple chosen servers have the same number of pending jobs, the job is routed to the server for which the job at the head of the queue has currently received the least service, that is, is in the lowest layer $k$. The intuition is that the job in service is more likely to be a short job that will finish soon.
For this policy, we set:
\begin{equation} \label{eq:xi_SQ_RTB}
\xi(k, \ell) = (\ell, k).
\end{equation}
We further refer to this policy as the SQ($d$) with Runtime based Tie Breaking policy, denoted by SQ($d$)-RTB. It is similar to SQ($d$) but may improve performance by using the attained service time to resolve ties.

\subsection{SQ($d$) with Runtime Exclusion (SQ($d$)-RE($T$))}\label{sec:RE}
For this policy, we only rely on the queue length information, as long as the attained service
time does not exceed some threshold $T$ (e.g. $T = 2$). When a server is queried it simply replies by stating its queue length and whether or not the attained service time
is more than time $T$. This corresponds to setting $\Delta = T$ and $r=1$ in our model.
Whenever there are $1\leq d' \leq d$ servers for which the attained service time does not exceed the threshold $T$, we assign the job to the server with the least number of jobs amongst the $d'$ chosen queues. If all $d$ servers report an attained service time above $T$ (meaning $k = 2$), the job is routed to the server with the least number of jobs amongst all $d$ chosen servers. The idea is that we assume a job is \textit{large} when its runtime \textit{significantly exceeds} the average runtime of a job. For this policy, we define:
\begin{equation}\label{eq:xi_SQ_RE}
\xi(k,\ell)=(k, \ell),
\end{equation}
We refer to this policy as  SQ($d$) with Runtime Exclusion, denoted by SQ($d$)-RE($T$).
  In \cite{mitzenmacher2020queues}, a somewhat similar policy is studied, where one uses the SQ($d$) policy to select a queue and each job carries one bit of information (to be interpreted as an indication of whether the job is large). Long jobs are put at the back of the queue, while short jobs are put at the head of the queue. The main difference lies in the fact that we infer the job size information at runtime and do not assume jobs carry an indication about their size.
Further all jobs are placed at the back of the queue in our case. 

\subsection{SQ($d$)-RTB with Runtime Exclusion (SQ($d$)-RTB-RE($T$))}
This policy is a mix of SQ($d$)-RTB and SQ($d$)-RE($T$). We set some threshold $T > 0$ where we suspect that a job is large once the attained service time exceeds this threshold. When $d' \geq 1$ of the randomly chosen servers report an attained service time smaller than $T$, we employ the SQ($d$)-RTB policy to decide which of these $d'$ servers receives the incoming job. When the attained service time of the $d$ selected servers exceeds $T$, we use SQ($d$)-RTB to pick a server. This policy can also be described by defining an appropriate $\xi$:
\begin{equation}\label{eq:xi_SQd_RTB_RE}
\xi(k, \ell)
=
(\delta\{k > s\}, \ell, k),
\end{equation}
here $\delta\{A\}$ is equal to one if $A$ is true and zero otherwise and $T = \Delta s = c_s$ for some $s \leq r$.
We refer to this policy as the SQ($d$)-RTB with Runtime Exclusion policy, denoted by SQ($d$)-RTB-RE($T$). 

\subsection{Least Attained Service (LAS($d$))}
For this policy, we assume that the dispatcher assigns incoming jobs to the server for which the job at the head of the queue has attained the least amount of service among $d$ randomly
selected servers. This policy is defined by:
\begin{equation}\label{eq:xi_LAS}
\xi(k, \ell) = k.
\end{equation}
We further refer to this policy as the Least Attained Service policy, denoted as LAS($d$). The success of this policy relies on having highly variable jobs, such that it is more important to know whether the job at the head of a queue is large rather than knowing the number of jobs in the queue. Note that if we pick $\Delta > 0$ sufficiently small, the probability of having a tie tends to zero.

\subsection{LAS($d$) with Queue length based Tie Breaking (LAS($d$)-QTB)} \label{sec:LASQTB}
For this policy, we assign the job to the queue for which the attained service time of the job at the head of the queue is minimal, but in case there is a tie between multiple servers, we assign the incoming job to the server with the fewest number of waiting jobs. To this end, we define:
\begin{equation}\label{eq:LAS_QTB}
\xi(k,\ell)=(k, \ell).
\end{equation}
We refer to this policy as LAS($d$) with Queue length based Tie Breaking, which we denote by LAS($d$)-QTB.
\begin{remark}
While for other policies, having a small $\Delta > 0$ is always beneficial, the performance of LAS($d$)-QTB may actually improve by having a larger value of $\Delta$. In particular, letting $r=1$ and $c_1=T$ this policy reduces to SQ($d$)-RE($T$). While setting an arbitrary $r$ and $c_1,\dots,c_r$, this policy may be viewed as an SQ($d$) policy with multiple thresholds.
\end{remark}

\subsection{Runtime Exclusion (RE($d, T$))}
For this policy, we set $r=1$ and $\Delta=T$. In this case having $k=0$ means that the
server is idle, $k=1$ means the job in service has an attained service time below $T$ and
$k=2$ otherwise. 
 We find:
\begin{equation}\label{eq:xi_RE}
\xi(k,\ell)=k.
\end{equation}
We refer to this policy as the Runtime Exclusion policy, denoted by RE($d, T$).
It is a special case of the LAS($d$) policy with $r=1$.

\subsection{Least Expected Workload (LEW($d$))}
For this policy we assume that the job size distribution of the incoming jobs is known, such that
we can compute the mean residual service time given the attained service time. In this case we can use the more refined information of the expected residual service time $\E[X \mid X > c_k] - c_k$ to decide which queue should receive the incoming job. This policy also fits in our framework by defining:
\begin{equation}\label{eq:xi_EW}
\xi(k, \ell)=(\ell-1) \cdot \E[X] + \E[X \mid X \geq c_k] - c_k,
\end{equation}
we refer to this policy as the Least Expected Workload policy, denoted by LEW($d$).
\begin{remark}
As our main objective is to study load balancers which are not aware of the job size distribution, we only use this policy to see how it compares with the other strategies. This policy may be viewed as an idealized version of what the other policies attempt to do: avoid queues with a large expected workload. As such, we view the performance of LEW($d$) as the goal of what the other policies try to achieve. In Section \ref{sec:numExperiments} we find that the performance of some policies indeed closely approximates that of LEW($d$).
\end{remark}

\section{Model Analysis} \label{sec:analysis}

In this section we use the cavity approach presented in \cite{bramsonLB} to study the 
performance of the load balancing algorithms introduced in the previous section. We refer to the attained service time of a job
at a server as its {\it age} $a$. Note that in our case the age $a$ of a job does not include the
waiting time of the job, only the time it has been in service.

\subsection{Description of the Cavity Map} \label{sec:cavity}
The cavity process intends to capture the evolution of a single server assuming
asymptotic independence among servers (see below). The state of a single server is captured
by the service phase ($j$), the age ($a$) of the job in service and the queue length ($\ell$). The state of a server is thus denoted as a triple ($j,a,\ell$). As arrivals occur according to a Poisson($\lambda N$) process and each arrival has a probability of $\frac{d}{N}$ to select any particular queue, the rate at which a server is selected as one of the $d$ random servers is equal to $\lambda d$, we refer to this rate as the {\it potential} arrival rate. At each potential arrival, $d-1$ independent copies of the queue at the cavity are considered. The state 
$(j,a,\ell)$ for each of these $d-1$ 
independent copies at time $t$ has the same distribution as the distribution of the
queue at the cavity at time $t$. The potential arrival is assigned to one of the $d$ selected queues based on the $(k,\ell)$ values reported by the 
queue at the cavity and the $d-1$ independent queues.  
Thus, the \textit{actual} arrival rate depends on both the queue length $\ell$ and the 
layer $k$ containing the age $a$ of the job at the head of the cavity queue at the time of a potential arrival. 
Let us denote this value by $\lambda_{act}(k,\ell)$. In general, we find that the number of jobs present in the queue at the cavity increases by one with a rate equal to $\lambda_{act}(k,\ell)$, whilst its age $a$ continuously increases at rate one, and the service phase $j$ evolves as dictated by the phase type distribution $(\alpha, A)$. When a job completes service, the age $a$ jumps to
zero, $\ell$ decreases by one and a new job starts service if present. 

In order to formally prove that the results presented in this paper correspond to
the limiting behavior as the number of servers $N$ tends to infinity, the modularized program
presented in \cite{bramsonLB} can be followed:
\begin{itemize}
\item[\textbf{a.}] \textbf{Asymptotic Independence.} Demonstrate $\Pi^N \rightarrow \Pi$ as $N \rightarrow \infty$, where $\Pi^N$ is the stationary distribution for the studied policy with $N$ queues and $\Pi$ is a stationary and ergodic distribution on $(\{1,\ldots,m\}\times[0,\infty)\times \mathbb{N})^\infty$. Show that the limit $\Pi$ is unique. Show that, for every $k$:
$$
\Pi^{(k)} = \bigotimes_{i=1}^k \Pi^{(1)},
$$
where $\Pi^{(k)}$ is $\Pi$ restricted to its first $k$ coordinates.
\item[\textbf{b.}] \textbf{The queue at the cavity.} Let $A_{act}^N$ denote the process of actual arrivals to the first queue. Show that $A_{act}^N \rightarrow \lambda_{act}$ in distribution as the number of servers $N$ tends to infinity. 
\item[\textbf{c.}] \textbf{Calculations.} Given $\lambda_{act}$, the actual arrival rates, analyse the queue at the cavity in the large $N$ limit using queueing techniques to express $\Pi^{(1)}$ as a function of $\lambda_{act}$:
$$
\Pi^{(1)}=T(\lambda_{act}).
$$
Moreover, the arrival rate is also determined by the state of a server $\Pi^{(1)}$ we thus have:
$$
\lambda_{act}= H(\Pi^{(1)}).
$$
We then must determine the fixed point of the cavity map, that is, solve
the equation $\Pi^{(1)}=T(H(\Pi^{(1)}))$  to obtain $\Pi^{(1)}$.
\end{itemize}
In this work, we focus on $\textbf{c}$, the computational step of the program. We present a numerical method to compute $\Pi^{(1)}$ for the load balancing policies in Section \ref{sec:examples}. For ease of notation we denote $\Pi^{(1)}$ as $\pi$.
We validate our results using simulation to show that the obtained solutions indeed correspond to the system under study (as $N\rightarrow \infty$).

\subsection{Obtaining the Steady State} \label{sec:obtain_stationary}

In this section we indicate how to compute $\pi$ given $\lambda_{act}$.
Given the discussion in the previous subsection, this step corresponds to determining the
steady state of a 
queueing system with the following characteristics:
\begin{enumerate}
\item A single server queue that serves jobs in FCFS manner.
\item Service times of a job follow an order $m$ 
phase type distribution with parameters $(\alpha,A)$.
\item Poisson arrivals with a rate that depends on the queue length $\ell$ and
the attained service time $a$ (if the queue is busy).  
\item No arrivals when the queue length equals some large value $B$.
\end{enumerate}
More specifically, let $0 = c_0 < c_1 < \ldots < c_r < c_{r+1} = \infty$, then  
the dependence on $a$ is such that the Poisson arrival rate is only influenced by 
$k$, where $k$ is the unique index such that $a \in (c_{k-1},c_k]$.
In other words, the queueing system under consideration is fully determined by
$B$, $(\alpha,A)$ and a set of arrival rates $\{\lambda_0\} \cup \{ \lambda_{k,\ell} |
k \in \{1,\ldots,r+1\}, \ell \in \{1,\ldots,B-1\}\}$. The reason why we can assume
an arrival rate equal to zero when the queue length equals some $B < \infty$ is explained
further on. 

For the purpose of determining a fixed point of the cavity map, it suffices to 
develop a method to compute
the steady-state probabilities $\pi^{busy}_{k,\ell,j}$ that the 
service phase equals $j$, queue length equals $\ell$ and the attained service time $a$ 
belongs to $(c_{k-1},c_k]$ {\it given that the queue is busy}, for $k=1,\ldots,r+1$, $j=1,\ldots,m$
and $\ell =1,\ldots,B$. These probabilities are not affected by $\lambda_0$.

We start by defining a finite state discrete-time Markov chain on the state space
\[\mathcal{S} = \{(k,\ell,j) | k = 0,\ldots,r; \ell = 1,\ldots,B; j = 1,\ldots,m\},\]
by observing the queue whenever the attained service time equals $c_k$ for some $k=0,\ldots,r$. Note that this implies that we observe
the queueing system exactly $i$ times for a job with length $z \in [c_{i-1},c_i)$:
once when the service starts and $i-1$ times during its service.

Given that this DTMC is in state $(k,\ell,j)$ we can have two types of transitions:
\begin{enumerate} 
\item The job in service remains in service for at least $c_{k+1}-c_k$ more time and the chain 
transitions to a state of the form $(k+1,\ell',j')$ with $\ell' \geq \ell$.
\item The length of the job in service is below $c_k$ and 
the chain transitions to a state of the form $(0,\ell',j')$ with $\ell' \geq \ell-1$.
\end{enumerate} 
Hence, if we order the states in $\mathcal{S}$ in lexicographical order,
the transition probability matrix $P^{DTMC}$ has the following form
\begin{equation}\label{eq:PDTMC}
P^{DTMC} = \begin{pmatrix}
\Xi^{(1)} & \Lambda^{(1)} &  & &  \\
\Xi^{(2)} &  & \Lambda^{(2)} & & \\
 \vdots &  & & \ddots &  \\
\Xi^{(r)} & & &  & \Lambda^{(r)}  \\
\Xi^{(r+1)} & & & &    \\
\end{pmatrix},
\end{equation}
where $\Xi^{(k)}$ and $\Lambda^{(k)}$, for $k=1,\ldots,r+1$, are
square matrices of size $mB$.
To express these matrices we define the size $mB$ matrix
\begin{equation}\label{eq:G}
G = (I_B \otimes \mu) \cdot \left( \begin{pmatrix}
1 & 0 & \dots & 0\\
1 & 0 & \dots & 0\\
0 & 1 & \dots & 0\\
0 & 0 & \ddots & 1
\end{pmatrix}
\otimes \alpha \right),
\end{equation}
where $\mu$ was defined as $-A \textbf{1}$ and the size $mB$ matrices
\begin{equation}\label{eq:Fk}
F^{(k)} = I_{B} \otimes A + \begin{pmatrix}
-\lambda_{k,1} & \lambda_{k,1} & & & & \\
& -\lambda_{k,2} & \lambda_{k,2} & & & \\
& & \ddots & \ddots & & \\
& & & -\lambda_{k, B-1} & \lambda_{k,B-1} &\\
& & & & -\lambda_{k, B} &
\end{pmatrix} \otimes I_m,
\end{equation}
for $k=1,\ldots,r+1$, with $\lambda_{k,B} = 0$. 

The matrix $G$ contains the rates at which 
$(\ell,j)$ changes due to a service completion, which does not depend on
the attained service time.
The matrix $F^{(k)}$ 
captures the evolution of the queue length and service phase 
when there is no service completion and 
the attained service time is between
$c_{k-1}$ and $c_k$. 
Based on these interpretations we have, for $k=1,\ldots,r+1$,
\begin{equation}\label{eq:Lambdak}
\Lambda^{(k)}=e^{F^{(k)} (c_k-c_{k-1})},
\end{equation} 
and $\Xi^{(k)} = \Psi^{(k)} G$, where
\begin{equation}\label{eq:Psik}
\Psi^{(k)} = \int_{0}^{c_k-c_{k-1}} e^{F^{(k)} \delta} \, d\delta = (I_{mB} - \Lambda^{(k)}) \cdot (-F^{(k)})^{-1}.
\end{equation}
Let $\hat \pi^{(k)}$ be the size $mB$ vector holding the steady state probabilities of
the DTMC characterized by $P^{DTMC}$ corresponding to the states of the
form $(k,\ell,j)$, then due to the structure of $P^{DTMC}$, we have
\[\hat \pi^{(k)} = \hat \pi^{(k-1)} \Lambda^{(k)} = \hat \pi^{(0)}  \prod_{i=1}^k \Lambda^{(i)},\]
for $k=1,\ldots,r+1$. Using the first balance equation, the vector $\hat \pi^{(0)} $
is therefore found as 
\[ \hat \pi^{(0)}  = \hat \pi^{(0)} \sum_{k=0}^{r} \left( \prod_{i=1}^k \Lambda^{(i)}\right) \Xi^{(k+1)}.\]
As $\Xi^{(k)} = \Psi^{(k)} G$, the above equation can be restated as $\hat \pi^{(0)}  = \hat \pi^{(0)} \Omega^{(1)}$, where
$\Omega^{(r+1)} = \Psi^{(r+1)} G$ and 
\[ \Omega^{(k)} =  \Psi^{(k)} G + \Lambda^{(k)} \Omega^{(k+1)}. \]
Hence, the time complexity to compute the steady state probabilities of the DTMC
characterized by $P^{DTMC}$ equals $O(r m^3 B^3)$.

As stated before, our aim is to compute the steady state probabilities  
 $\pi^{busy}_{k,\ell,j}$ that the service phase equals $j$, queue length equals $\ell$ and the attained service time $a$ belongs to $(c_{k-1},c_k]$ given that queue is busy. 
 These probabilities can now be computed from the steady state probabilities
$\hat \pi_{k,\ell,j}$ of the DTMC, by looking at the amount of time that the
queue length and service phase equal $(\ell',j')$ in  between
two points of observation of the DTMC. Note that
entry $((\ell,j),(\ell',j'))$ of the matrix $\Psi^{(k)}$ contains the 
expected amount of time that the queue length equals $\ell'$ and the server
is in phase $j'$ during a single transition of the DTMC from state
$(k-1,\ell,j)$ to any other state. We therefore have
\[ \pi^{busy}_{k,\ell',j'} = \nu ( \hat  \pi_{(k-1)} \Psi^{(k)})_{(\ell',j')},
\] 
for $k=1,\ldots,r+1$, $\ell' = 1,\ldots,B$ and $j'=1,\ldots,m$, with $\nu$ 
being the normalization constant. As each job brings $\E[X]=1$ work on average, the queue at
the cavity is empty with probability $1-\lambda$. This allows us to express the stationary distribution $\pi$ by setting $\pi_0=\lambda$ and renormalizing $\pi^{busy}$ to sum to $\lambda$.

\subsection{Determining the Arrival Rate} \label{sec:lambda}
In this section we indicate how to determine $\lambda_{act}$ given $\pi$.
For any $(k,\ell)$, we define:
\begin{equation}\label{eq:setAB}
\mathcal{A}(k,\ell) = \{(k', \ell') \mid \xi(k', \ell') \geq \xi(k,\ell) \} \qquad \mathcal{B}(k,\ell) = \{(k', \ell') \mid \xi(k', \ell') > \xi(k,\ell)\}.
\end{equation}
We denote $x_{k, \ell} = \sum_{j} \pi_{k,\ell,j}$ the probability that, in the stationary regime, the queue at the cavity is in layer-$k$ with queue length $\ell$. Furthermore we let:
\begin{equation}\label{eq:uvw}
u_{k,\ell} = \sum_{(k',\ell') \in \mathcal{A}(k,\ell)} x_{k', \ell'} \qquad v_{k,\ell} = \sum_{(k', \ell') \in \mathcal{B}(k,\ell)} x_{k',\ell'} \qquad w_{k,\ell} = u_{k,\ell} - v_{k,\ell}.
\end{equation}
\begin{remark}
When $\xi$ is injective, we simply have $w_{k,\ell} = x_{k,\ell}$.
\end{remark}
 We obtain the arrival rate in layer-$k$ with queue length $\ell$ in the following Proposition:
\begin{proposition}\label{prop:arr_rate_xi}
The arrival rate to the queue at the cavity, given its queue length $\ell$ and service layer-$k$ is given by:
\begin{equation}\label{eq:arr_rate_xi}
\lambda_{act}(k,\ell) = \frac{\lambda}{w_{k,\ell}} \left( u_{k, \ell}^d - v_{k, \ell}^d \right).
\end{equation}
\end{proposition}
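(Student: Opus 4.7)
\medskip

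\noindent\textbf{Proof plan for Proposition~\ref{prop:arr_rate_xi}.}

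The plan is to condition on the state of the cavity queue and explicitly count the probability that it wins an arrival, using the asymptotic independence of the $d-1$ probed servers (all drawn from the same stationary law as the cavity). The potential arrival rate at the cavity is $\lambda d$, so it suffices to compute the probability $P(k,\ell)$ that, given the cavity reports $(k,\ell)$, the job is routed there; then $\lambda_{act}(k,\ell)=\lambda d\cdot P(k,\ell)$.

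First I would set up the combinatorial picture. By the rule described in Section~\ref{sec:examples}, the cavity receives the job exactly when (i) every other probed server has $\xi$-value at least $\xi(k,\ell)$, and (ii) the uniform tie-break among the servers whose $\xi$-value equals $\xi(k,\ell)$ selects the cavity. Using the definitions in~\eqref{eq:uvw}, a single independent copy reports a state in $\mathcal{A}(k,\ell)\setminus\mathcal{B}(k,\ell)$ with probability $w_{k,\ell}$ and in $\mathcal{B}(k,\ell)$ with probability $v_{k,\ell}$, with $u_{k,\ell}=w_{k,\ell}+v_{k,\ell}$. Let $S$ be the number of the $d-1$ independent copies whose $\xi$-value equals $\xi(k,\ell)$; by independence,
\begin{equation*}
\mathbb{P}[S=s,\ \text{all others have }\xi\ge\xi(k,\ell)]=\binom{d-1}{s}w_{k,\ell}^{\,s}\,v_{k,\ell}^{\,d-1-s}.
\end{equation*}
Conditional on this event, the cavity wins the uniform tie-break among the $S+1$ servers sharing value $\xi(k,\ell)$ with probability $1/(S+1)$.

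Next I would evaluate the resulting sum
\begin{equation*}
P(k,\ell)=\sum_{s=0}^{d-1}\frac{1}{s+1}\binom{d-1}{s}w_{k,\ell}^{\,s}\,v_{k,\ell}^{\,d-1-s}.
\end{equation*}
Using the identity $\tfrac{1}{s+1}\binom{d-1}{s}=\tfrac{1}{d}\binom{d}{s+1}$ and reindexing by $i=s+1$, this becomes
\begin{equation*}
P(k,\ell)=\frac{1}{d\,w_{k,\ell}}\sum_{i=1}^{d}\binom{d}{i}w_{k,\ell}^{\,i}\,v_{k,\ell}^{\,d-i}=\frac{(w_{k,\ell}+v_{k,\ell})^d-v_{k,\ell}^{\,d}}{d\,w_{k,\ell}}=\frac{u_{k,\ell}^{\,d}-v_{k,\ell}^{\,d}}{d\,w_{k,\ell}}.
\end{equation*}
Multiplying by $\lambda d$ yields exactly \eqref{eq:arr_rate_xi}. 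The degenerate case $w_{k,\ell}=0$ corresponds to a state the cavity visits with probability zero and is handled by a continuity/limit argument (the ratio $(u^d-v^d)/w\to d\,u^{d-1}$ as $w\to 0$), so the conditional rate remains well-defined wherever it is needed.

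The only genuine subtlety, and therefore the step I would be most careful about, is the justification for replacing the actual queues of the $d-1$ other probed servers by independent copies with law $\pi$: this is precisely the asymptotic independence input from step~\textbf{a} of the Bramson programme recalled in Section~\ref{sec:cavity}, and is what the authors take as a hypothesis. Once this is granted, the remainder is the short combinatorial identity above, and the handling of tie-breaks (item (ii)) is the only place where a naive ``$u^d-v^d$'' formula, valid when $\xi$ is injective, must be refined by the $1/(S+1)$ factor that produces the $w_{k,\ell}$ in the denominator.
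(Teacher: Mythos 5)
Your proposal is correct and follows essentially the same route as the paper's proof: the same conditioning on the number $j$ (your $S$) of tying copies among the $d-1$ independently sampled servers, the same sum $\lambda d\sum_{j}\frac{1}{j+1}\binom{d-1}{j}w_{k,\ell}^{j}v_{k,\ell}^{d-1-j}$, and the same binomial simplification to $\frac{\lambda}{w_{k,\ell}}(u_{k,\ell}^d-v_{k,\ell}^d)$. The only additions are cosmetic: you spell out the identity $\frac{1}{s+1}\binom{d-1}{s}=\frac{1}{d}\binom{d}{s+1}$ and the $w_{k,\ell}=0$ limit, which the paper leaves implicit.
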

\begin{proof}
Whenever a job arrives to the system, it samples the queue at the cavity and $d-1$ i.i.d.~servers with distribution $x_{k,\ell}$. Potential arrivals occur with rate $\lambda \cdot d$ and each potential arrival joins the queue at the cavity with probability $\frac{1}{j+1}$ if $j$ of the chosen servers are in some state $(k', \ell')$ with $\xi(k',\ell') = \xi(k, \ell)$, while all other servers are in state $(k', \ell')$ with $\xi(k', \ell') > \xi(k,\ell)$. We obtain:
\begin{align}
\lambda_{act}(k, \ell)
&= \lambda d \sum_{j=0}^{d-1} \frac{1}{j+1} \binom{d-1}{j} \cdot w_{k, \ell}^j \cdot v_{k, \ell} ^{d-1-j}\label{eq:arr_rate_xi_num_stable}\\
&= \frac{\lambda}{w_{k,\ell}}{((w_{k,\ell}+v_{k,\ell})^d - v_{k,\ell}^d)}, \nonumber
\end{align}
which simplifies to \eqref{eq:arr_rate_xi}, finishing the proof.
\end{proof}
\begin{remark}
While \eqref{eq:arr_rate_xi} is the more elegant formula, the expression in \eqref{eq:arr_rate_xi_num_stable} is actually more stable for numerical purposes. 
\end{remark}

\subsection{Iterative Procedure}\label{sec:iterative}
In this section we show how to compute the fixed point $\pi$ described in Section \ref{sec:cavity}  using Sections \ref{sec:obtain_stationary} and \ref{sec:lambda}. 
The procedure in Section \ref{sec:obtain_stationary} corresponds to the function $T$ such that the stationary distribution of the queue at the cavity $\pi = T(\lambda_{act})$. The 
result in Section \ref{sec:lambda} corresponds to the map $H$ such that $\lambda_{act} = H(\pi)$. The fixed point of the cavity map is given by the fixed point $\pi$ which satisfies $\pi = T(H(\pi))$. To this end, we propose the following iterative scheme to compute $\pi$ :
\begin{enumerate}
\item Pick some $\pi^{(0)}$ (e.g.~$\pi^{(0)}_0=1$), set some tolerance $tol$ and $n=0$.
\item Compute $\pi^{(n+1)}=T(H(\pi^{(n)}))$. \label{enumerate:it_step2}
\item If $\|\pi^{(n+1)} - \pi^{(n)}\|_1 = \sum_{k,\ell,j} | \pi^{(n+1)}_{k,\ell,j} - \pi^{(n)}_{k,\ell,j}| < tol$ we accept $\pi^{(n+1)}$, otherwise increment $n$ by one and return to step \ref{enumerate:it_step2}.
\end{enumerate}

Throughout our numerical experiments we typically employ the tolerance $tol=10^{-10}$. Moreover, as illustrated in Figure \ref{fig1}, $\|\pi^{(n+1)} - \pi^{(n)}\|_1$ decreases exponentially in $n$ and the number of iteration required is typically below $50$, where each iteration
requires $O(m^3B^3r)$ time. If we start with $\pi^{(0)}_0=1$, then setting $B=n$ during the
$n$-th iteration suffices. We can even make use of a smaller $B$ value as the arrival
rates $\lambda_{k,\ell}$ decrease very rapidly in $\ell$ for most policies considered.

\begin{figure*}[t]
\begin{subfigure}{0.43\textwidth}
\centering
\captionsetup{width=.8\linewidth}
\includegraphics[width=1\linewidth]{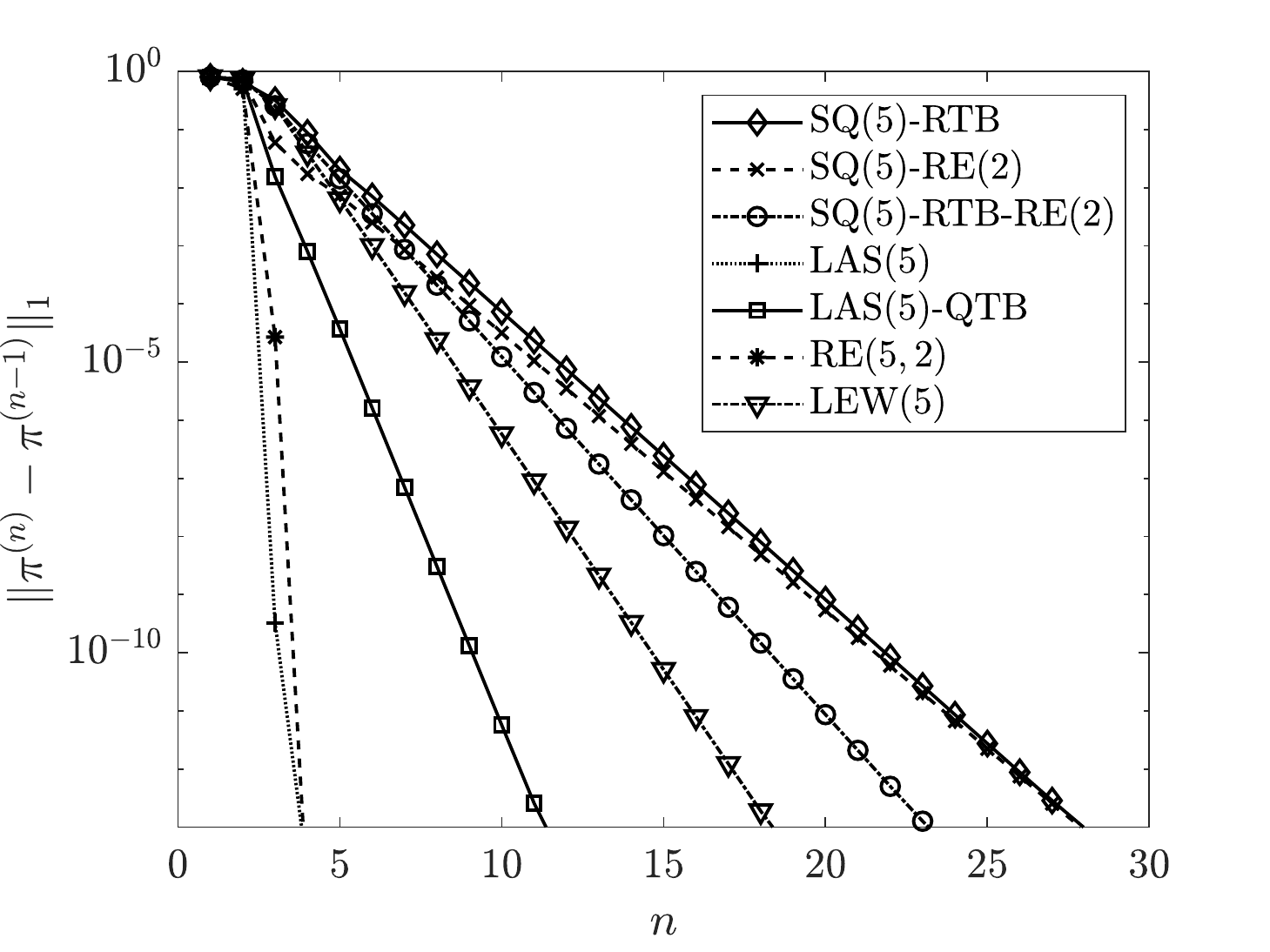}
\caption{We set $d=5$ and consider all policies in Section \ref{sec:examples}.}
\label{fig1a}
\end{subfigure}
\begin{subfigure}{.4\textwidth}
\centering
\captionsetup{width=.8\linewidth}
\includegraphics[width=1\linewidth]{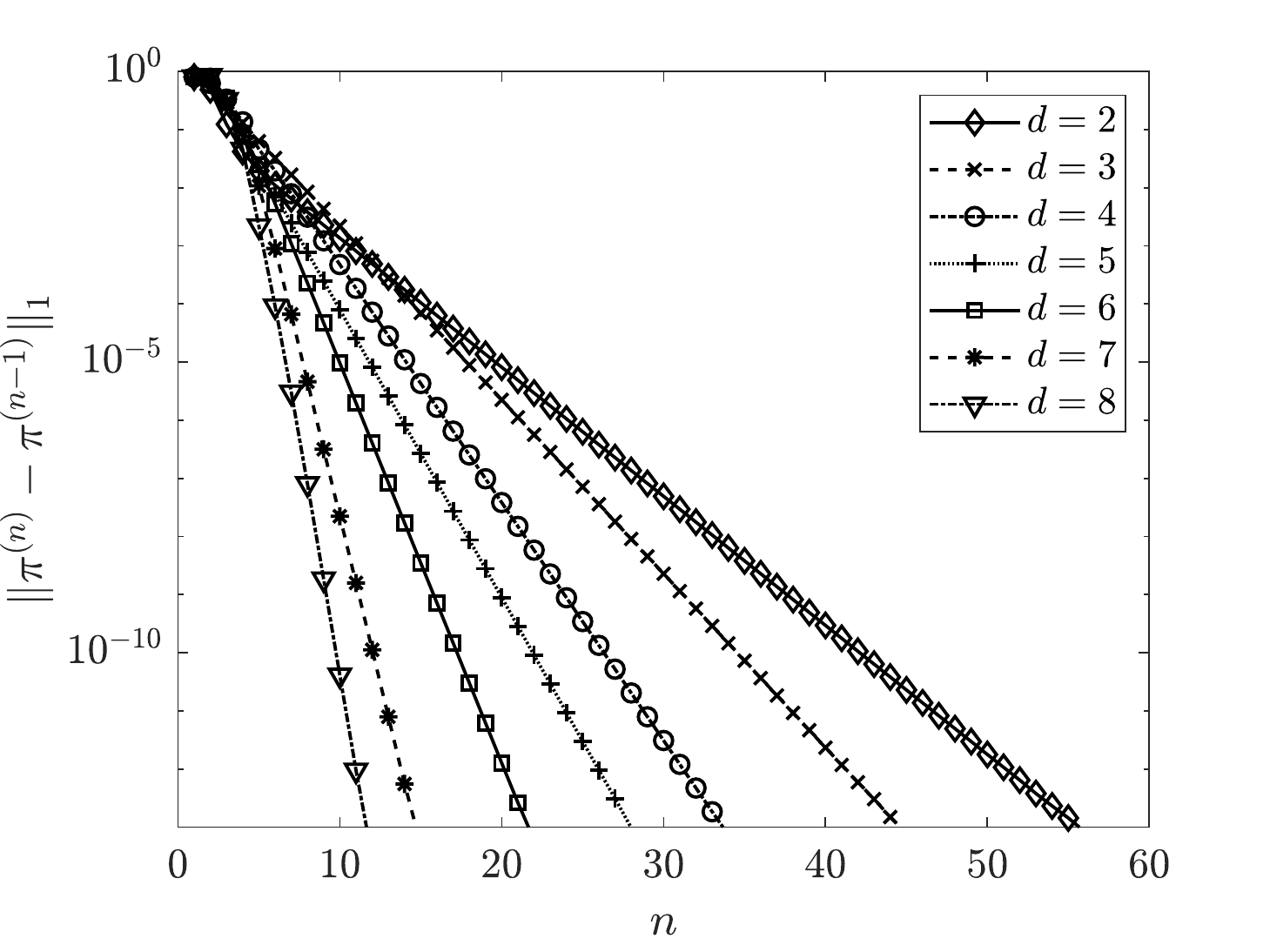}
\caption{We use the SQ($d$)-RTB for $d=2,\ldots,8$.}
\label{fig1b}
\end{subfigure}
\caption{Convergence of $\|\pi^{(n)} - \pi^{(n-1)}\|_1$ for $\lambda=0.8$, $\Delta=0.1$ and HEXP($10, 1/2$) job sizes.}
\label{fig1}
\end{figure*}

\subsection{Obtaining Performance Measures}
Given the stationary distribution $\pi$ obtained in Section \ref{sec:iterative}, we show how to obtain the performance measures associated to our model. In particular we are interested in the average queue length, response time and waiting time. The expected queue length can easily be computed as $\E[Q]=\sum_{k,\ell,j} \ell \cdot \pi_{k,\ell,j}$, while the expected response time can then be derived using Little's Law $\E[R]=\E[Q]/\lambda$. The expected waiting time is then given by $\E[W]=\E[R]-1$.

In order to compute the waiting time distribution, we denote by $J_{k,\ell,j}$ the probability that a random job arriving to the system joins a queue with length $\ell$ for which the job at the head of the queue is in phase $j$ and resides in layer-$k$. It is not hard to see (similar to the proof of Proposition \ref{prop:arr_rate_xi}) that:
\begin{equation} \label{eq:Jkellj}
J_{k, \ell, j} = d\cdot \pi_{k,\ell,j} \cdot \sum_{s=0}^{d-1} \frac{1}{s+1} \binom{d-1}{s} w_{k,\ell}^{s} \cdot v_{k, \ell}^{d-1-s} = \frac{\pi_{k, \ell, j}}{w_{k, \ell}} \cdot \left( u_{k, \ell}^d - v_{k, \ell}^d \right).
\end{equation}
From this one easily computes the probability of joining a queue with length $\ell$ for which the job currently being served is in phase $j$: $J_{\ell, j} = \sum_k J_{k,\ell,j}$. Let us denote by $X$ a generic job size variable, and by $X_j$ a generic random Phase Type random variable with rate matrix $A$ which starts in phase $j$. Associated with these values we denote $X_{\ell,j}$, the convolution of $X_j$ with $\ell-1$ i.i.d.~copies of $X$.  We find that the waiting time distribution is given by:
\begin{equation}\label{eq:FWbar}
\bar F_W(w)
=
\sum_{\ell \geq 1} \sum_{j} J_{\ell, j} \bar F_{X_{\ell,j}}(w).
\end{equation}
From this it is not hard to obtain the response time distribution:
\begin{equation} \label{eq:FRbar}
\bar F_R(w)=\sum_{\ell \geq 1} \sum_{j} J_{\ell, j} \bar F_{X_{\ell+1,j}}(w) + (1-\lambda^d) \bar F_X(w).
\end{equation}

\subsection{Job Size Distribution}\label{sec:PH}
In real systems job sizes are known to be highly variable and a significant part of the total workload is often offered by a small fraction of long jobs, while the remaining workload consists mostly of short jobs (e.g., \cite{Sparrow,delgado2016job, delgado2015hawk}). A measure for the variability of a distribution is the Squared Coefficient of Variation (SCV), which is defined as $\frac{\textup{Var}(X)}{\E[X]^2}$. The SCV of an exponential random variable is exactly equal to one, while measurements in real systems reveal much higher SCVs (e.g., \cite[Chapter 20]{bookMor}). Therefore we use the class of Mixed Erlang (MErlang) distributions to investigate the performance of the proposed policies.

The parameters of the MErlang distribution are set such that we can vary the SCV in a systematic manner as well as the fraction $f$ of the workload offered by the small jobs. 
More precisely we fix a value of $k$, with probability $p$ a job is a type-$1$ job and has an Erlang($k, k\cdot \mu_1$) length with $\mu_1 > 1$ and with the remaining probability $1-p$ a job is a type-$2$ job and
has an Erlang($k, k\cdot \mu_2$) length with $\mu_2 < 1$. Hence, the type-$2$ jobs are longer on average
and we therefore sometimes refer to the type-$2$ jobs as the {\it long} jobs. 
Note that when $k=1$ the MErlang distribution is an order $2$ hyperexponential 
distribution. In order to set the parameters of the MErlang distribution, 
we first pick some value for the parameter $k$ and subsequently, the parameters $p, \mu_1$ and $\mu_2$ are set such that the following three values are matched: 
\begin{itemize}
\item the mean job length (set to one),
\item the squared coefficient of variation (SCV) and
\item the fraction $f$ of the workload that is offered by the type-$1$ jobs.
\end{itemize}
Due to \cite[Equation (4)]{fang2001hyper} one finds that for 
$p, \mu_1$ and $\mu_2$ fixed, the SCV as a function of $k$ is such that
$(SCV(k)+1)/(1+1/k)$ is a constant equal to $(p/\mu_1^2+(1-p)/\mu_2^2)/(p/\mu_1+(1-p)/\mu_2)^2$.
Hence, if we define $\widetilde{SCV} = 2\frac{SCV+1}{1+\frac{1}{k}}-1$,
we can match $(1,SCV,f)$ for a general $k$ by matching
$(1,\widetilde{SCV},f)$ for a hyperexponential distribution 
as is \cite[Section 7.1]{hellemans2018power},
that is , set $\mu_1, \mu_2$ as:
\begin{align*}
 \mu_{i} &= \frac{\widetilde{SCV}+(4f-1)+(-1)^{i-1} \sqrt{(\widetilde{SCV}-1)(\widetilde{SCV}-1+8f\bar f)}}{2f(\widetilde{SCV}+1)},
\end{align*}
with $\bar f = 1-f$  and $p = \mu_1 f$.

 We note that this distribution has a PH representation with $\alpha \in \R^{2\cdot k}$ defined by $\alpha_1 = p$, $\alpha_{k+1}=1-p$ and $\alpha_i=0$ for $i \neq 1, k+1$. Furthermore we define the transition matrix $A \in \R^{2\cdot k}$ by stating its non-zero elements: $A_{i,i}=-k \cdot \mu_1, A_{k+i, k+i} = -k\mu_2$ for $i=1,\dots,k$ and $A_{i,i+1} = k\cdot \mu_1$, $A_{k+i, k+i+1} = k \cdot \mu_2$ for $i=1,\dots,k-1$. Throughout, we denote this job size distribution as MErlang($SCV, f, k$) and as HEXP($SCV, f$) when $k=1$.

\begin{table*}
	\caption{Comparison of mean response time for the finite system and the cavity method.} \label{table:finite_accuracy}
	\[
	\begin{array}{@{}l*{7}{c}@{}}
		\toprule
		\text{Policy} & \multicolumn{7}{c@{}}{\text{N}}\\
		\cmidrule(l){2-8}
		& 10 & 50 & 100 & 500 & 1000 & 2000 & \infty\\
		\midrule
		\text{SQ(3)-RTB} & 1.6204 & 1.0409 & 0.9829 & 0.9278 & 0.9217 & 0.9182  & 0.9172\\
		\text{SQ(5)-RE(2)} & 3.3083 &  1.8969  & 1.7229 &  1.6017 &  1.5810 &  1.5656 &  1.5649 \\
		\text{SQ(10)-RTB-RE(2)} & 2.1140 &  0.3635  & 0.2374 &  0.1565 &  0.1452 &  0.1384 &  0.1366  \\
		LAS(2) & 4.8688 &  3.9673  & 3.850 &  3.7789 &  3.7743 &  3.7551 &  3.7156 \\
		\text{LAS(7)-QTB} & 7.6238 &  1.4891  & 0.9909 &  0.6986 &  0.6824 &  0.6523 &  0.6462  \\
		\text{RE(6, 2)} & 3.4524 &  1.7820  & 1.5838 &  1.4292 &  1.4087 &  1.3934 &  1.3846 \\
		\text{LEW(8)} & 6.0683 &  1.5494  & 1.1312 &  0.8801 &  0.8484 &  0.8275 &  0.8266\\
		\bottomrule
	\end{array}
	\]
\end{table*}

\section{Finite System Accuracy} \label{sec:finiteAccuracy}

In this section we compare the mean waiting time that corresponds to the
fixed point of the cavity map with simulation experiments. The simulation setup
is identical to the model, except that the number of servers $N$ is finite.  
For each policy in Section \ref{sec:examples} we selected some arbitrary
parameter setting and varied the number of servers from $N=10$ to
$N=2000$. All simulation runs simulate the system up to time $t=10^7/N$ and use a warm-up period of $30\%$. Each simulation is the mean of $40$ runs. The simulation experiments were performed using the following parameter settings:
\begin{itemize}
\item SQ($3$)-RTB : $\lambda = 0.7$, $\Delta=0.01$ and MErlang($10, 1/2, 2$) job sizes.
\item SQ($5$)-RE($2$) : $\lambda = 0.8$ and HEXP($10, 1/10$) job sizes.
\item SQ($10$)-RTB-RE($2$) : $\lambda = 0.8$, $\Delta=0.1$ and MErlang($15, 1/3, 5$) job sizes.
\item LAS($2$) : $\lambda = 0.6$, $\Delta=0.5$ and HEXP($20, 1/4$) job sizes.
\item LAS($7$)-QTB : $\lambda = 0.9$, $\Delta=0.01$ and MErlang($20, 2/3, 5$) job sizes.
\item RE($6, 2$) : $\lambda = 0.8$ and HEXP($10,1/4$) job sizes.
\item LEW($8$) : $\lambda = 0.9$, $\Delta=0.5$ and HEXP($15, 1/3$) job sizes.
\end{itemize}
The results are summarized in Table \ref{table:finite_accuracy}. We observe that for all examples there is indeed convergence of the mean waiting time, rendering our models to be accurate for high values of $N$.

\section{Numerical Experiments} \label{sec:numExperiments}
%\subsection{Mean Waiting Time}
Throughout this section we compare the proposed policies relative to the SQ($d$) policy. Therefore, we are interested in the relative improvement of the proposed policies, in particular we focus on the quantity:
\begin{equation}\label{eq:EWrel}
E_{W, rel, P} = \frac{\E[W^{(SQ(d))}] - \E[W^{(P)}]}{\E[W^{(SQ(d))}]},
\end{equation}
where $\E[W^{(P)}]$ denotes the expected waiting time for some policy $P$. This value denotes how much additional waiting time using the SQ($d$) policy yields. Note that if one was mainly interested in the relative improvement in the response time, one would compute:
$$
E_{R, rel, P} = \frac{\E[R^{(SQ(d))}] - \E[R^{(P)}]}{\E[R^{(SQ(d))}]} = \frac{\E[W^{(SQ(d))}] - \E[W^{(P)}]}{\E[W^{(SQ(d))}] + 1},
$$
which is simply a flattened version of $E_{W, rel, P}$. As $\E[R]=\E[W]+1$, as we are mainly interested in the amount \textit{delay} a job experiences, we focus on $E_{W, rel, P}$ as defined in \eqref{eq:EWrel}.
%\subsection{Impact of $\lambda$}
\begin{figure*}[t]
\begin{subfigure}{0.4\textwidth}
\centering
\captionsetup{width=.8\linewidth}
\includegraphics[width=1\linewidth]{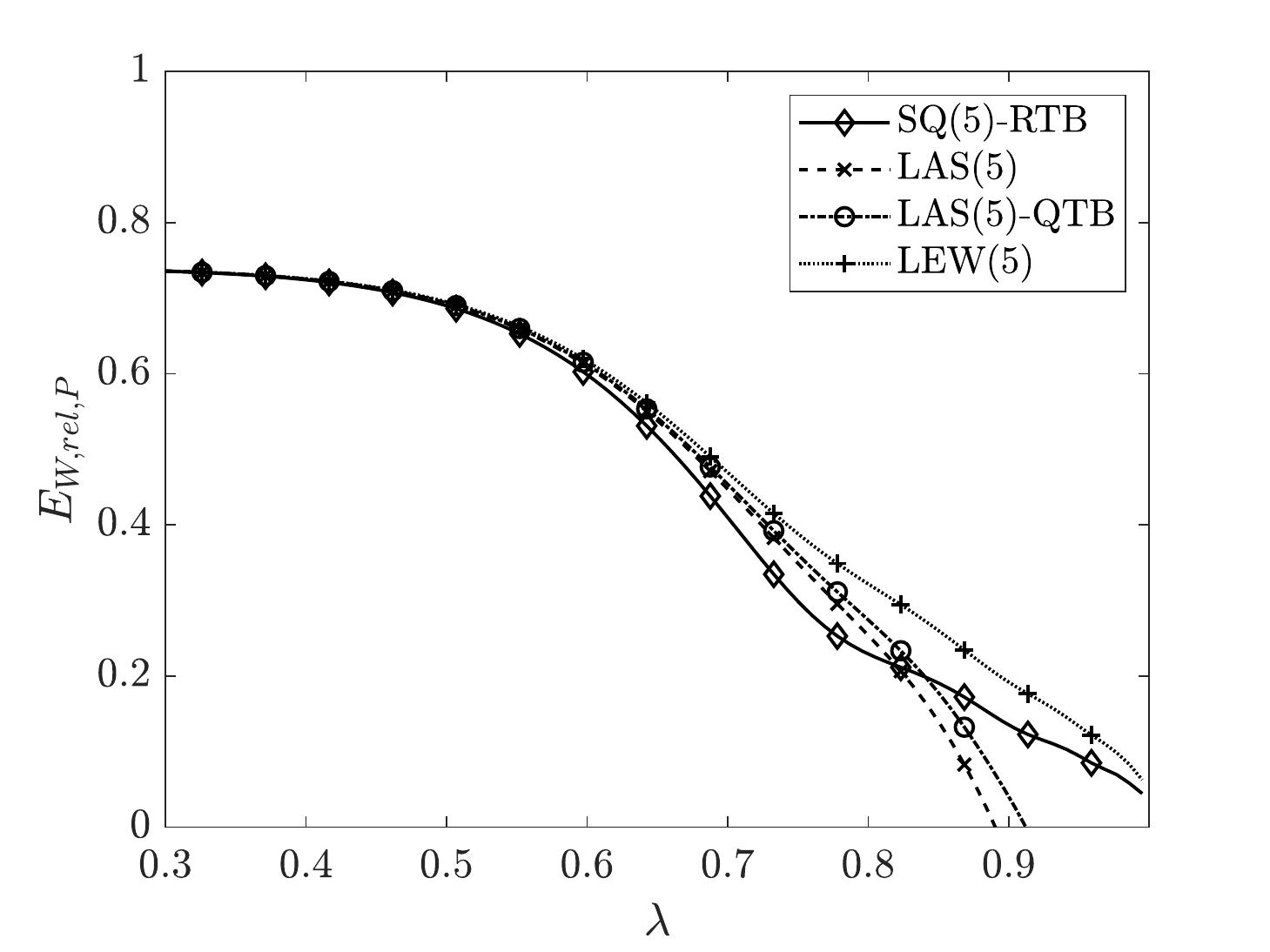}
\caption{Policies which do not make use of any threshold value.}
\label{fig2a}
\end{subfigure}
\begin{subfigure}{.4\textwidth}
\centering
\captionsetup{width=.8\linewidth}
\includegraphics[width=1\linewidth]{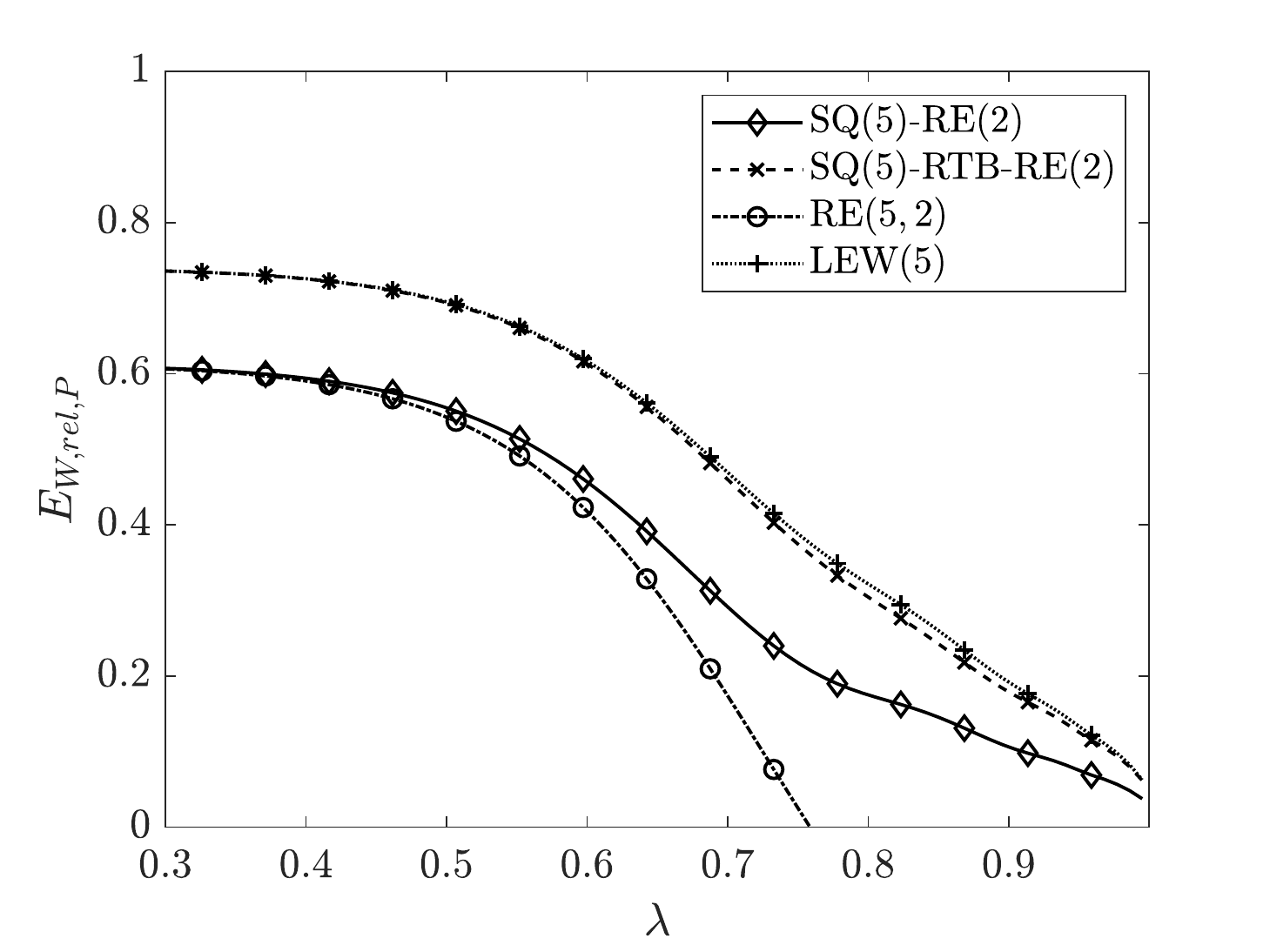}
\caption{Policies which make use of the threshold value $T=2$.}
\label{fig2b}
\end{subfigure}
\caption{Plots of the improvement in mean waiting time (see also \eqref{eq:EWrel}) as a function of $\lambda$ for $d=5$, $\Delta=0.1$ and HEXP($10,1/2$) job sizes.}
\label{fig2}
\end{figure*}
Throughout our numerical experimentation, we employ one central example as a base case in order to investigate the effect different parameters have on $E_{W, rel, P}$ for the policies $P$ discussed in Section \ref{sec:examples}. As the base case, we take $d=5$, $\Delta=0.1$ and HEXP($10,1/2$) distributed job sizes. In Figure \ref{fig2} we observe the evolution of $E_{W, rel, P}$ as a function of the arrival rate $\lambda$. From  Figure \ref{fig2}, we can already make quite a few observations:
\begin{itemize}
\item The relative improvement we obtain from using the attained service time information is significant, with relative improvements in the waiting time close to $75\%$. However, the improvement decreases as the arrival rate $\lambda$ approaches one. This makes sense as for a higher value of $\lambda$ the queues become longer and queue length information becomes more important than attained service time information.
\item All policies which mainly focus on the queue length information improve the performance
for all $\lambda$, as such they can be viewed as enhanced SQ($d$) policies. For the policies which mainly use the attained service time to distribute jobs, we observe that there exists some $\lambda_{\max}$ such that these policies outperform SQ($d$) for all $\lambda < \lambda_{\max}$ while they are outperformed by SQ($d$) when $\lambda > \lambda_{\max}$.  
\item Many of our policies perform as good as LEW($d$) for loads up to $0.6$, while these 
policies do not make use of the job size distribution.
The performance of the SQ($5$)-RTB-RE($2$) policy is very close to that of LEW($d$) 
for all $\lambda$, where the threshold $T=2$ was chosen quite arbitrarily (cf. Section \ref{sec:num_threshold}).
\item The improvement has a plateau at first, then the relative improvement tends to decrease with $\lambda$. For the policies which mainly employ the queue length information to distribute jobs, the curves
become somewhat irregular for high loads. This irregularity is discussed when we
consider the impact of larger $d$ values (see Section \ref{sec:num_d}).
\item The low traffic limit ($\lambda\rightarrow 0^+$) appears to be the same for all policies which solely rely on a threshold and those which rely on the attained service time information. This makes sense as in the low traffic limits, whenever a job has a non-zero waiting time, all $d$ chosen queues  have exactly one job waiting, meaning queue length information becomes irrelevant.
\end{itemize}

\begin{remark}
As $\lambda$ increases, the number of jobs in each queue increases. This reduces the value of knowing the attained service time of the job at the head of the queue. However, if we were to use a scheduling policy such as Processor Sharing, dispatchers may request information on the attained service time of all jobs in the queue, which could result in a more significant improvement at high loads.
\end{remark}
In the following sections, we reproduce Figure \ref{fig2}, where we change the value of one parameter. This allows us to investigate the effect this parameter has on our basic example.

\begin{figure*}[t]
\begin{subfigure}{0.4\textwidth}
\centering
\captionsetup{width=.8\linewidth}
\includegraphics[width=1\linewidth]{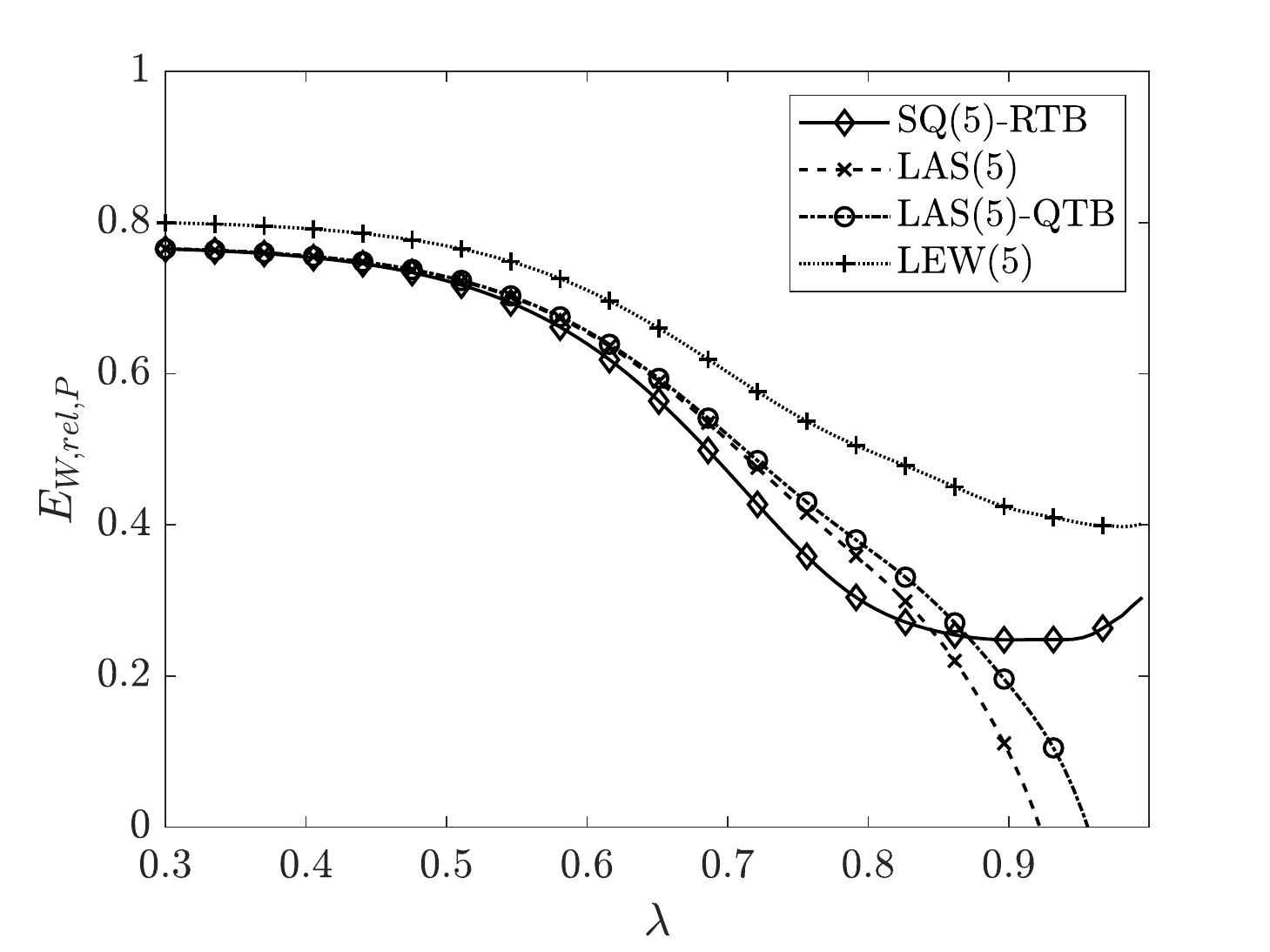}
\caption{Policies which do not make use of any threshold value, this figure should be compared to Figure \ref{fig2a}.}
\label{fig14a}
\end{subfigure}
\begin{subfigure}{.4\textwidth}
\centering
\captionsetup{width=.8\linewidth}
\includegraphics[width=1\linewidth]{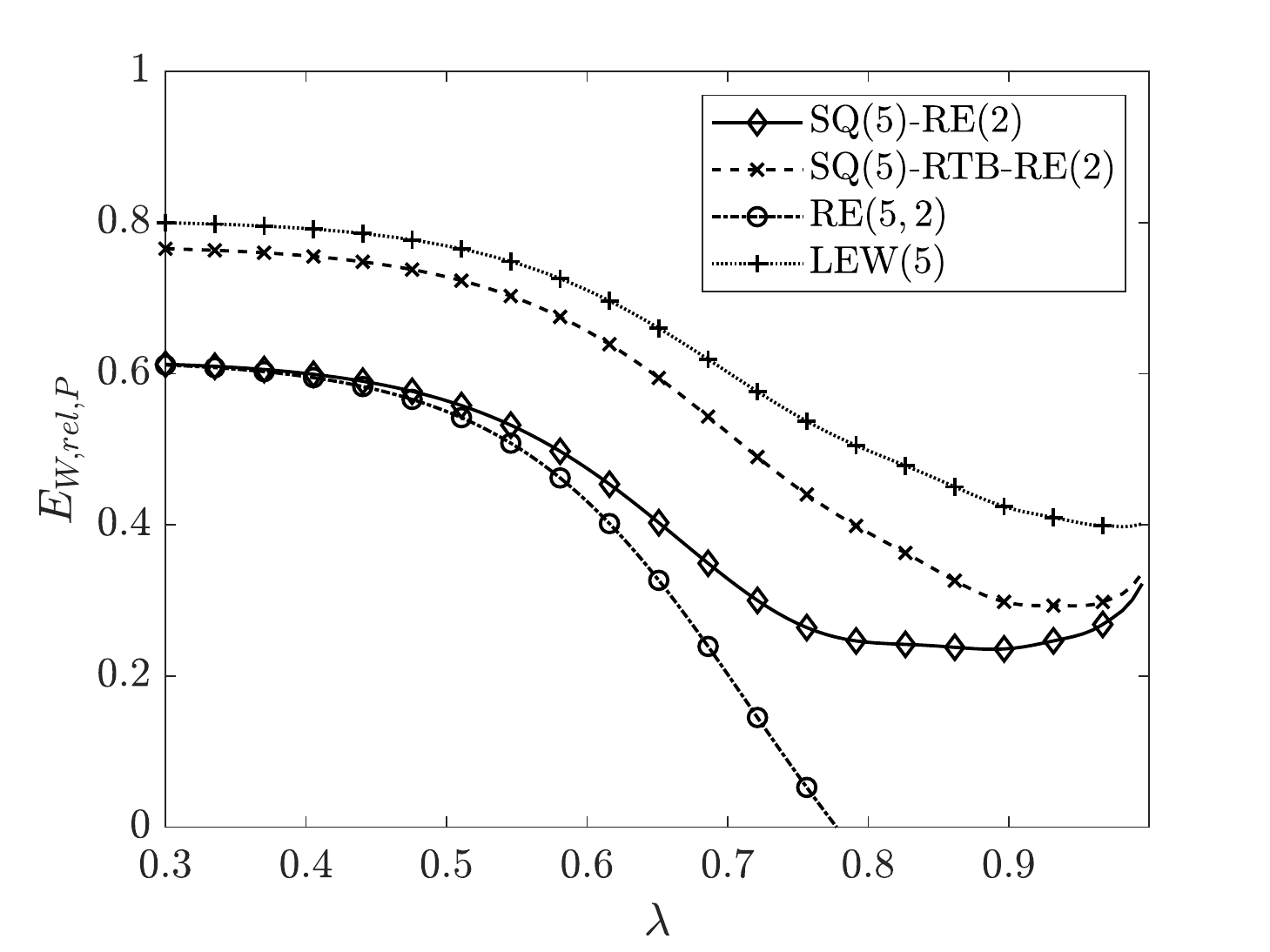}
\caption{Policies which make use of the threshold value $T=2$, this figure should be compared to Figure \ref{fig2b}.}
\label{fig14b}
\end{subfigure}
\begin{subfigure}{.4\textwidth}
\centering
\captionsetup{width=.8\linewidth}
\includegraphics[width=1\linewidth]{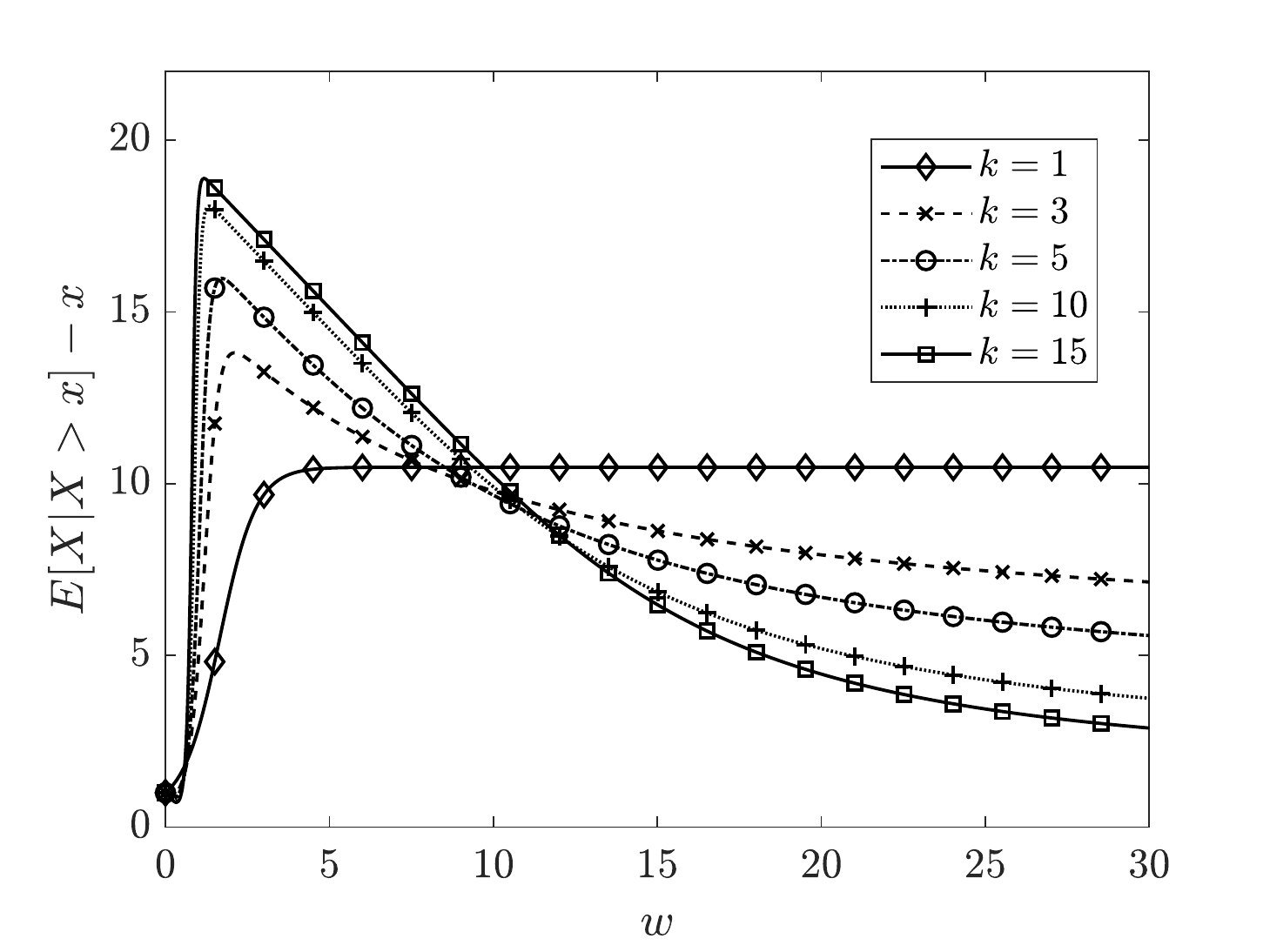}
\caption{Expected remaining service time for a MErlang($10,1/2,k$) distribution with $k=1, 3, 5, 10$ and $15$.}
\label{fig14c}
\end{subfigure}
\caption{Plots  of the improvement in mean waiting time as a function of $\lambda$ for $d=5$, $\Delta=0.1$, MErlang($10,1/2,\textbf{5}$) job sizes. Along with a plot of the conditional expected remaining service time for the MErlang($10,1/2,\textbf{k}$) distribution.}
\label{fig14}
\end{figure*}

\subsection{Impact of the number of phases $k$}
In Figures \ref{fig14a} and \ref{fig14b} we repeat the experiment from Figure \ref{fig2} with MErlang($10, 1/2, \textbf{5}$) job sizes. We clearly observe a greater improvement in performance than for the case $k=1$. This may seem counter-intuitive as one could argue that our policies are tailor-made for job sizes which have a decreasing hazard rate. Indeed, when the hazard rate is decreasing, the expected remaining workload increases as a function of the attained service time. This is in correspondence with $\xi(k,l)$ being non-decreasing in $k$ for our policies. In Figure \ref{fig14c} we observe that the MErlang($SCV, f, k$) only has a decreasing hazard rate for $k = 1$. However, as $k$ increases both the small and large jobs become 
less variable and thus have a more predictable size. This implies that large jobs
are somewhat easier to detect based on the attained service time, which explains why we have a greater improvement than with $k=1$. Also, the gap between our policies and the LEW($d$) policy increases. This is natural as it is sometimes better to assign a job to a server with a lower attained service time (as the hazard rate is non-decreasing), which is something the LEW($d$) does automatically while this is never done by the other policies considered.

\subsection{Impact of the number of chosen servers $d$} \label{sec:num_d}
\begin{figure*}[t]
\begin{subfigure}{0.4\textwidth}
\centering
\captionsetup{width=.8\linewidth}
\includegraphics[width=1\linewidth]{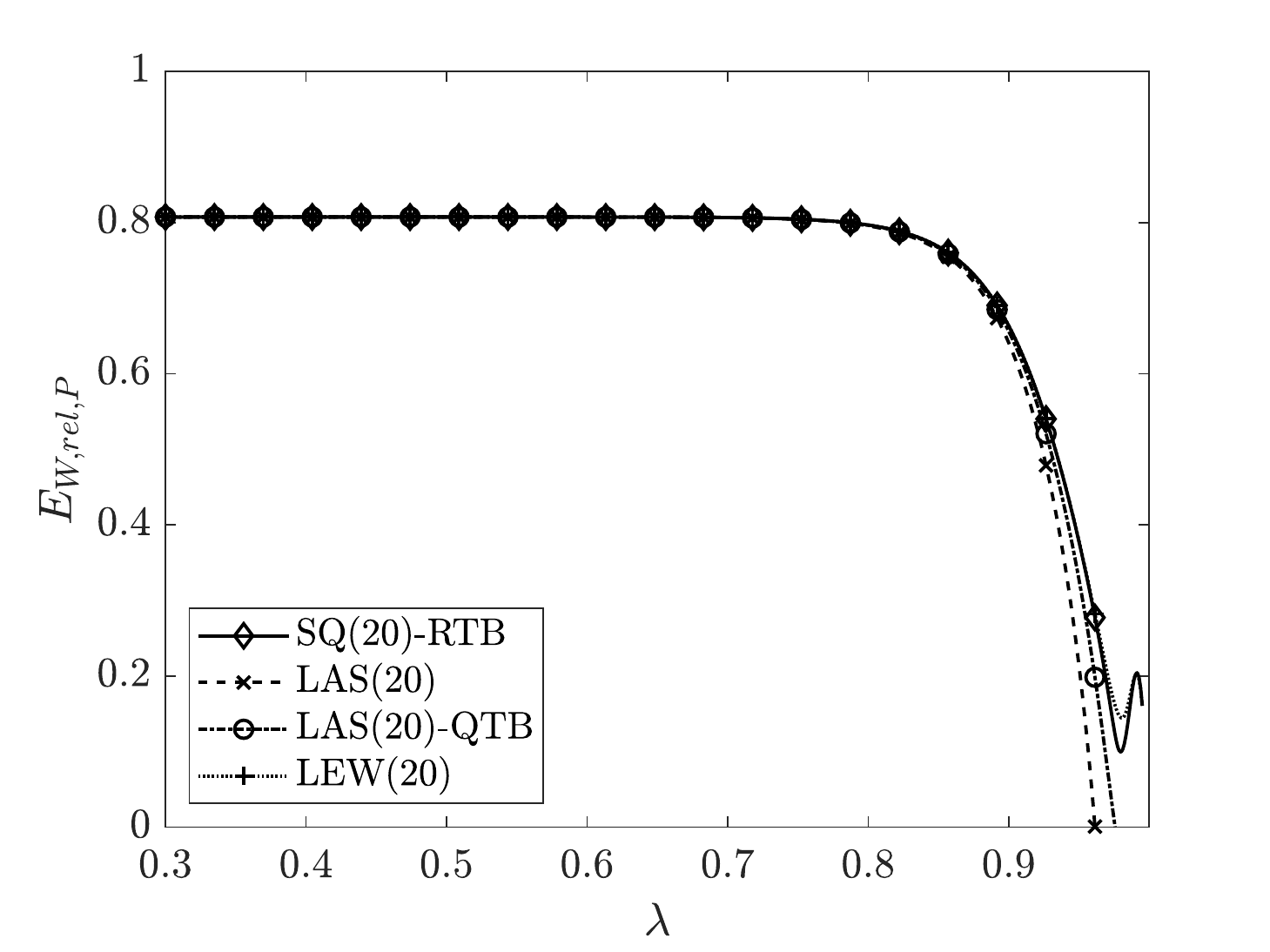}
\caption{Policies which do not make use of any threshold value.}
\label{fig5a}
\end{subfigure}
\begin{subfigure}{.4\textwidth}
\centering
\captionsetup{width=.8\linewidth}
\includegraphics[width=1\linewidth]{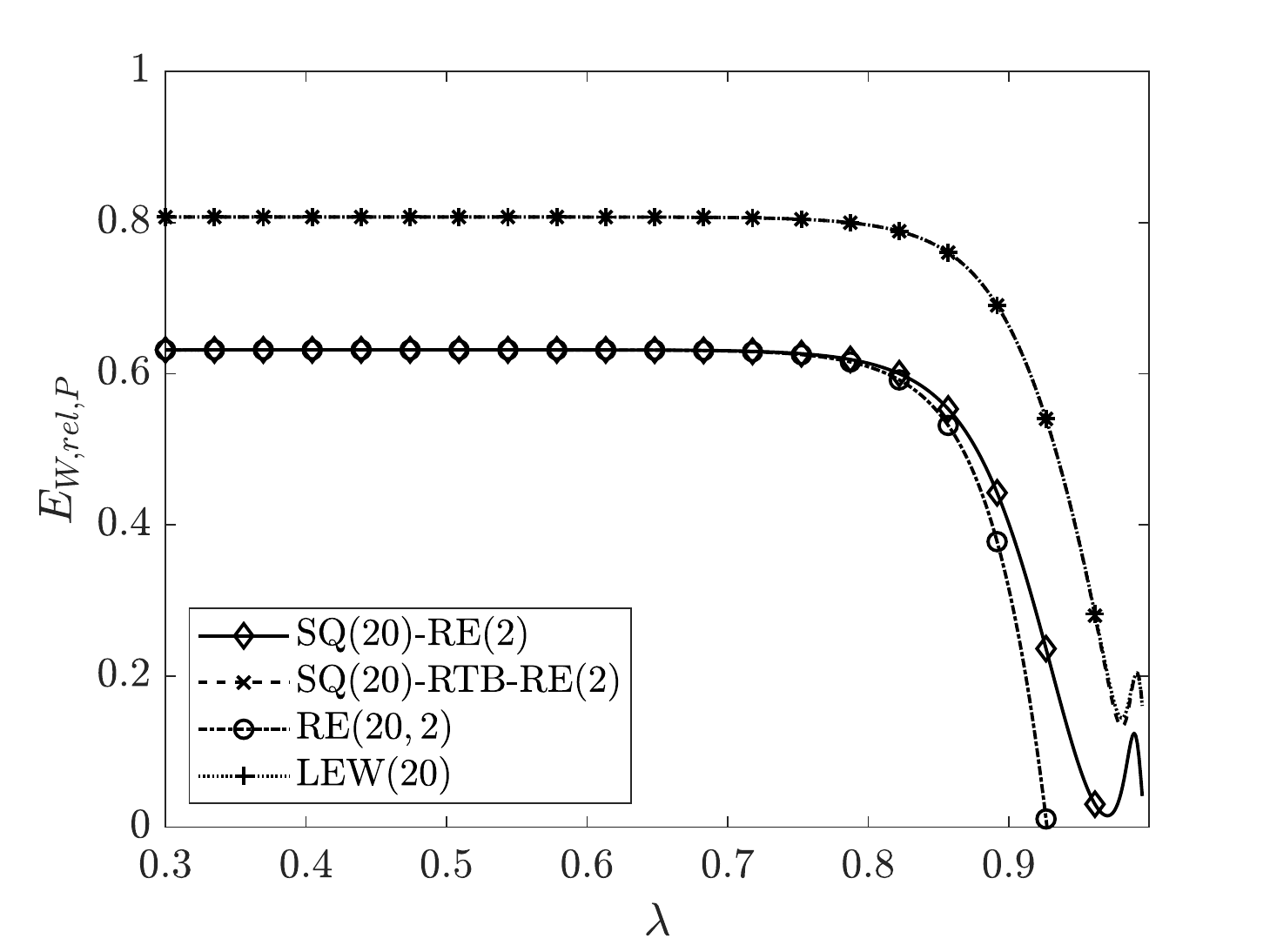}
\caption{Policies which make use of the threshold value $T=2$.}
\label{fig5b}
\end{subfigure}
\caption{Plots of the improvement in mean waiting time as a function of $\lambda$ for $\textbf{d=20}$, $\Delta=0.1$ and HEXP($10,1/2$) job sizes. This figure should be compared to Figure \ref{fig2}.}
\label{fig5}
\end{figure*}
In Figure \ref{fig5} we reproduce Figure \ref{fig2} after setting $d=20$ rather than $d=5$ (and thus setting $\Delta=0.1$, $SCV=10$ and $f=1/2$). We observe that the relative improvement increases significantly by increasing the value of $d$. In particular we make the following observations:
\begin{itemize}
\item Increasing the value of $d$ increases the low traffic limit and extends the range of
the load at which this value is maintained. This is expected as higher $d$ values
result in shorter queues.
\item For the same reason, the value of $\lambda_{max}$ 
up to which the policies which solely depend on the 
attained service time outperform SQ($d$) increases for larger $d$ values.
\item The gap between the SQ($d$)-RTB-RE($2$) and LEW($d$) becomes negligible by increasing the value of $d$ to $20$.
\item For some policies, the behaviour becomes irregular as $\lambda$ approaches one. To understand the cause of this irregularity, we define $\mathcal{T}_d(\lambda)$ as the expected number of jobs which have the least number of jobs pending amongst $d$ randomly selected servers (given that all $d$ chosen servers are busy). Letting $u_k$ denote the probability that a server has $k$ or more pending jobs, we find that:
\begin{equation} \label{eq:Tdlam}
\mathcal{T}_d(\lambda)=\frac{\sum_{k=1}^d k \cdot \sum_{\ell=1}^\infty p_{\ell,k} }{u_1^d},
\end{equation}
with $p_{\ell,k}=\binom{d}{k} (u_\ell - u_{\ell+1})^k \cdot u_{\ell+1}^{d-k}$ the probability that $k$ servers have exactly $\ell$ pending jobs and all other selected servers have at least $\ell+1$ jobs in their queue. In Figure \ref{fig6} we plot the evolution of $T_d(\lambda)$ for the SQ($d$) policy with various values of $d$, $SCV=10$ and $f=1/2$ in Figure \ref{fig6a} and $f=1/10$ in Figure \ref{fig6b}. We observe the same type of irregularity as in Figure \ref{fig5}. For $\mathcal{T}_d(\lambda)$ this behaviour can be understood as follows:
for low loads the $d$ chosen servers have queue length $1$ and the number of chosen servers with queue length one then decreases as $\lambda$ increases. 
If we look at the mean number of selected servers with queue length $2$,
then this mean increases with $\lambda$ (except for very high $\lambda$). For larger 
$\lambda$ values, the expected number of chosen servers with the least number of jobs 
also depends on the mean number of servers with queue length $2$ as all $d$ 
chosen servers
may have a queue length of at least $2$. This causes the \textquotedblleft waves\textquotedblright\ close to $\lambda=1$ which become more pronounced as $d$ grows. These waves also explain the irregularity in Figure \ref{fig5} (and Figure \ref{fig2}) as a higher value for $T_d(\lambda)$ implies we have more ties in the queue length and the attained service time information
is more valuable. In Figure \ref{fig6b} we observe that decreasing the value of $f$ decreases the 
height of the waves.
\end{itemize}
To further emphasize the aforementioned observations, we single out two policies which were used to create Figure \ref{fig5} and show their performance as a function of $\lambda$ for various values of $d: 2,5,10,15,20$. In Figure \ref{fig7a} we consider the SQ($d$)-RE($2$) policy whilst in Figure \ref{fig7b} we consider the RE($d,2$) policy. We can clearly see our observations being confirmed. In particular for SQ($d$)-RE($2$), we see waves becoming larger as $d$ increases, the plots of $d=15$ and $d=20$ even cross at some point.
\begin{figure*}[t]
\begin{subfigure}{0.4\textwidth}
\centering
\captionsetup{width=.8\linewidth}
\includegraphics[width=1\linewidth]{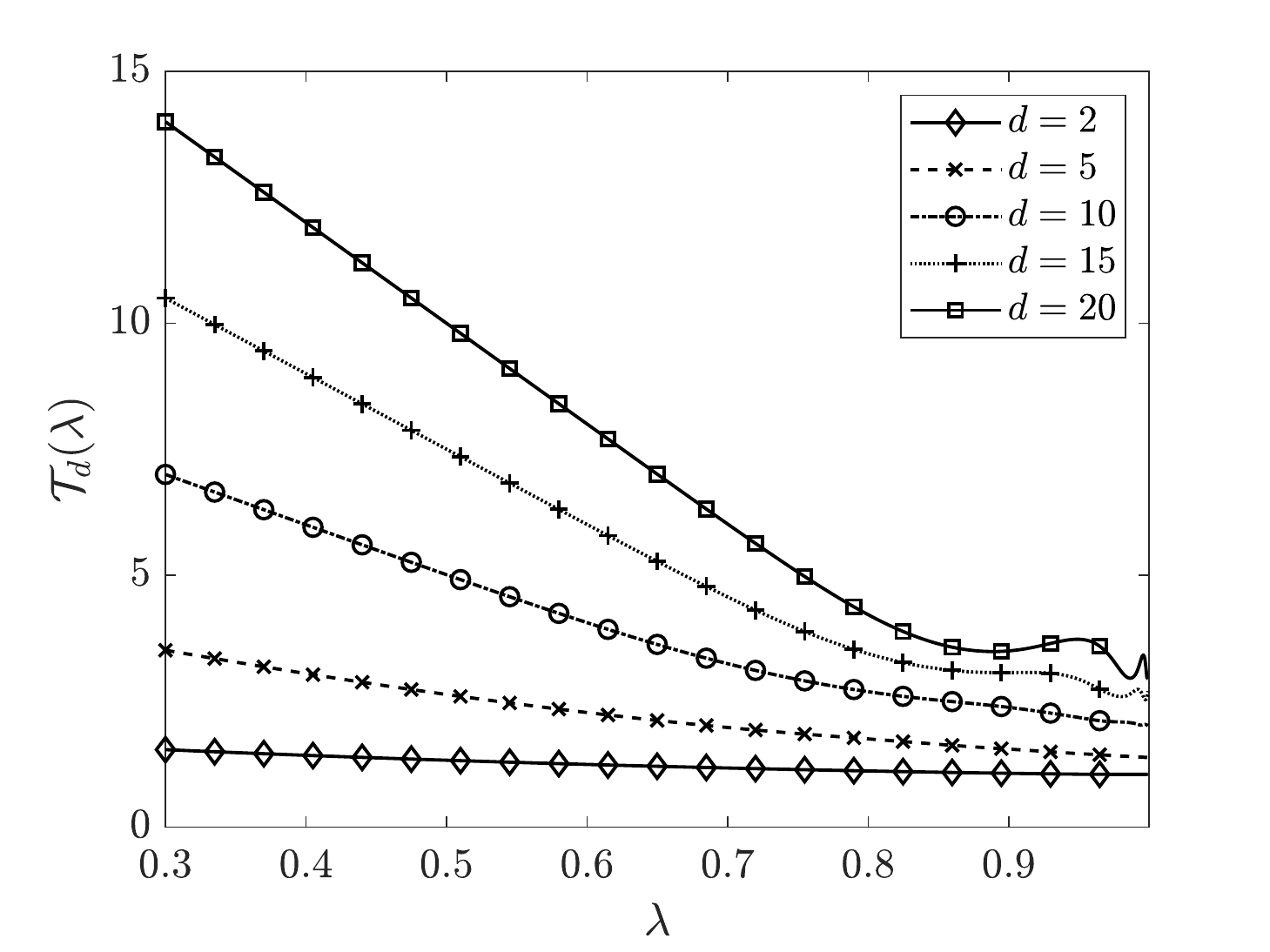}
\caption{Here we set $f=1/2$, this figure is used to motivate the behaviour in Figure \ref{fig5} and \ref{fig7} for $\lambda \approx 1$.}
\label{fig6a}
\end{subfigure}
\begin{subfigure}{.4\textwidth}
\centering
\captionsetup{width=.8\linewidth}
\includegraphics[width=1\linewidth]{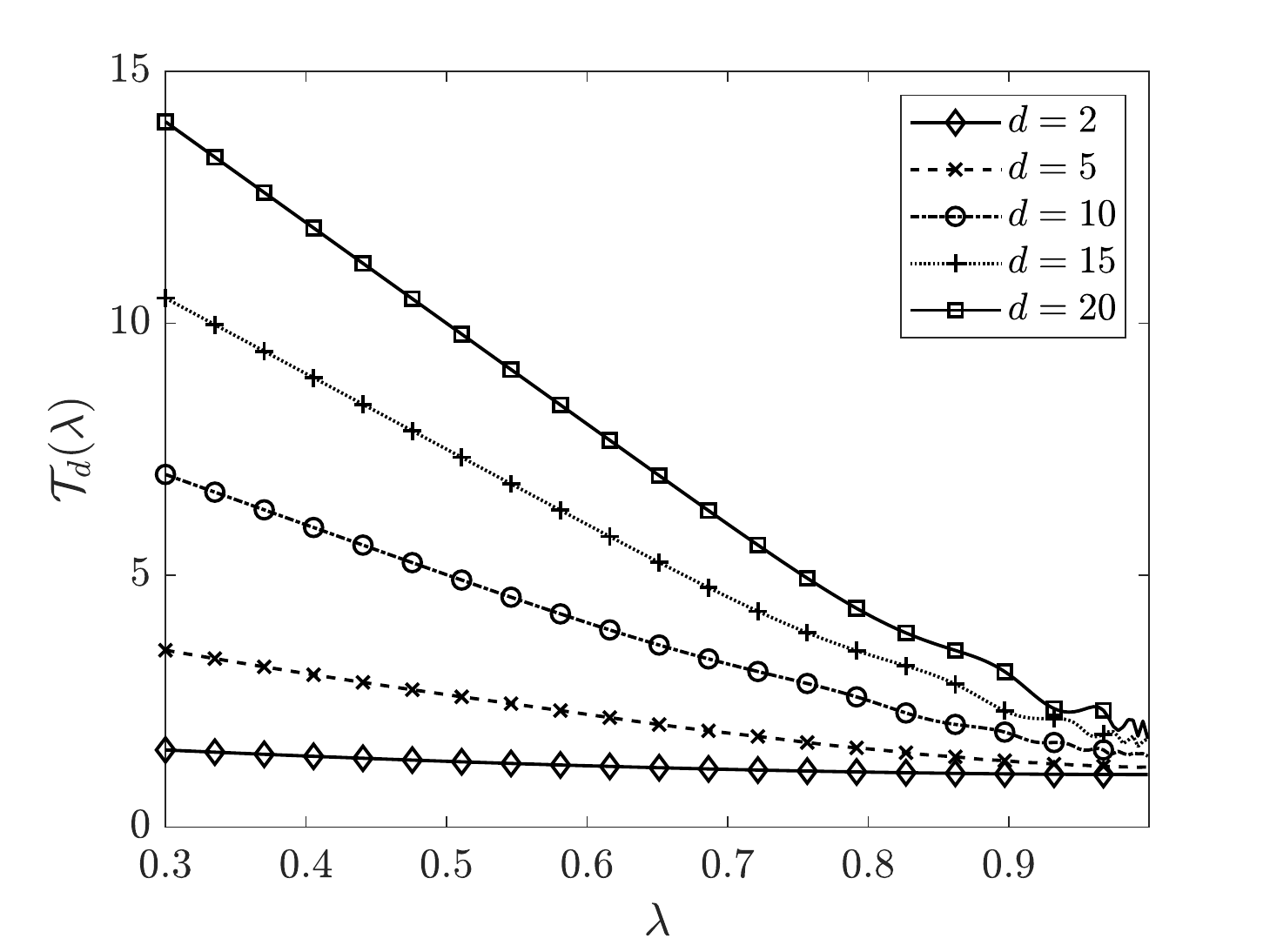}
\caption{Here we set $f=1/10$, as such this setting can be seen to correspond to that considered in Figure \ref{fig8}.}
\label{fig6b}
\end{subfigure}
\caption{Plots of $\mathcal{T}_d(\lambda)$ as a function of $\lambda$ for the SQ($d$) policy with various values of $d$ and HEXP($10, f$) job sizes.}
\label{fig6}
\end{figure*}

\begin{figure*}[t]
\begin{subfigure}{0.4\textwidth}
\centering
\captionsetup{width=.8\linewidth}
\includegraphics[width=1\linewidth]{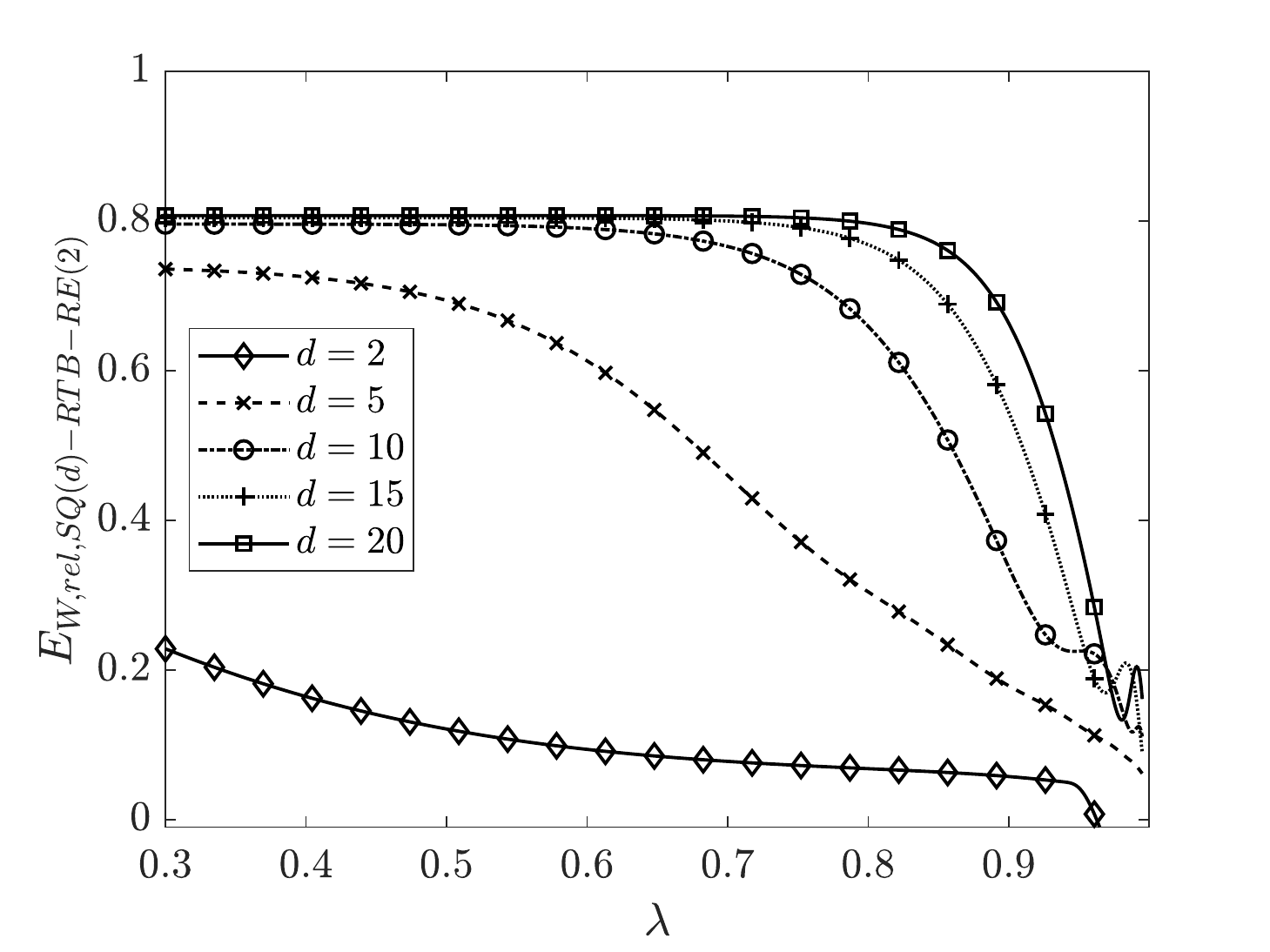}
\caption{The SQ($d$)-RTB-RE($2$) policy.}
\label{fig7a}
\end{subfigure}
\begin{subfigure}{.4\textwidth}
\centering
\captionsetup{width=.8\linewidth}
\includegraphics[width=1\linewidth]{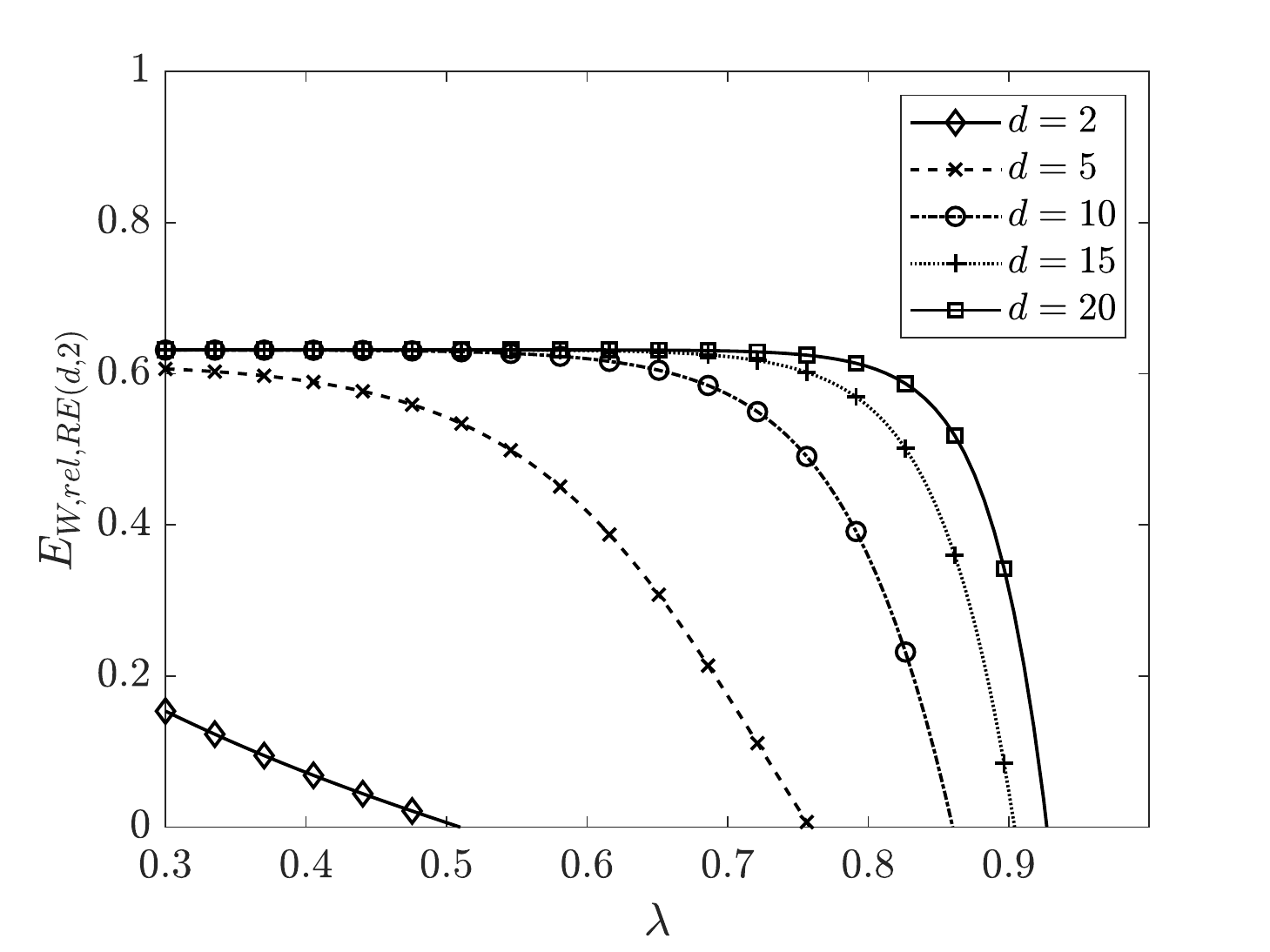}
\caption{The RE($d, 2$) policy.}
\label{fig7b}
\end{subfigure}
\caption{Plots of the improvement in mean waiting time as a function of $\lambda$ for various values of $d$, $\Delta=0.1$ and HEXP($10, 1/2$) job sizes. This figure should be viewed as a supplement of Figure \ref{fig5}.}
\label{fig7}
\end{figure*}

\subsection{Impact of the Squared Coefficient of Variation ($SCV$)}
\begin{figure*}[t]
\begin{subfigure}{0.4\textwidth}
\centering
\captionsetup{width=.8\linewidth}
\includegraphics[width=1\linewidth]{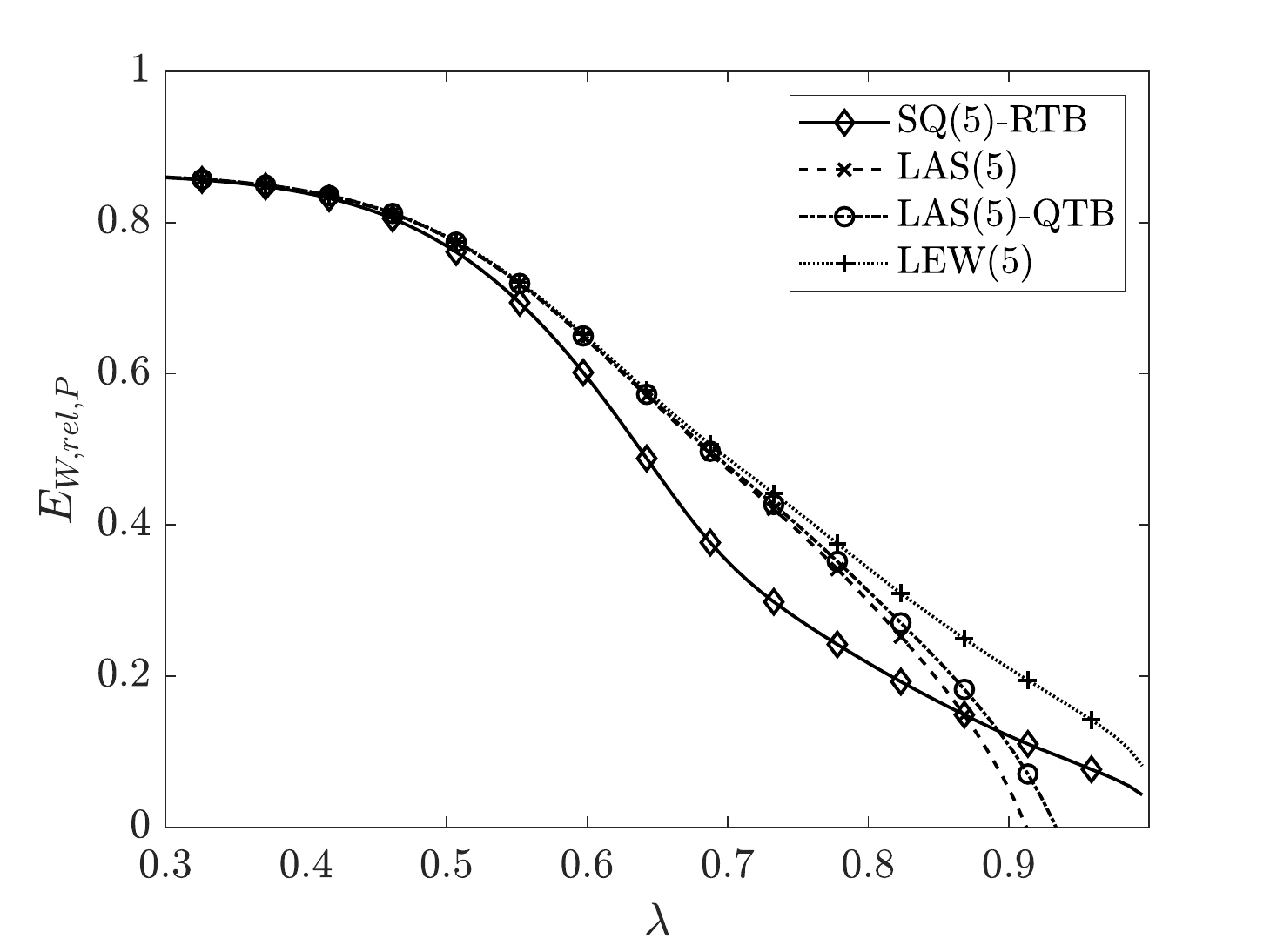}
\caption{Policies which do not make use of any threshold value.}
\label{fig3a}
\end{subfigure}
\begin{subfigure}{.4\textwidth}
\centering
\captionsetup{width=.8\linewidth}
\includegraphics[width=1\linewidth]{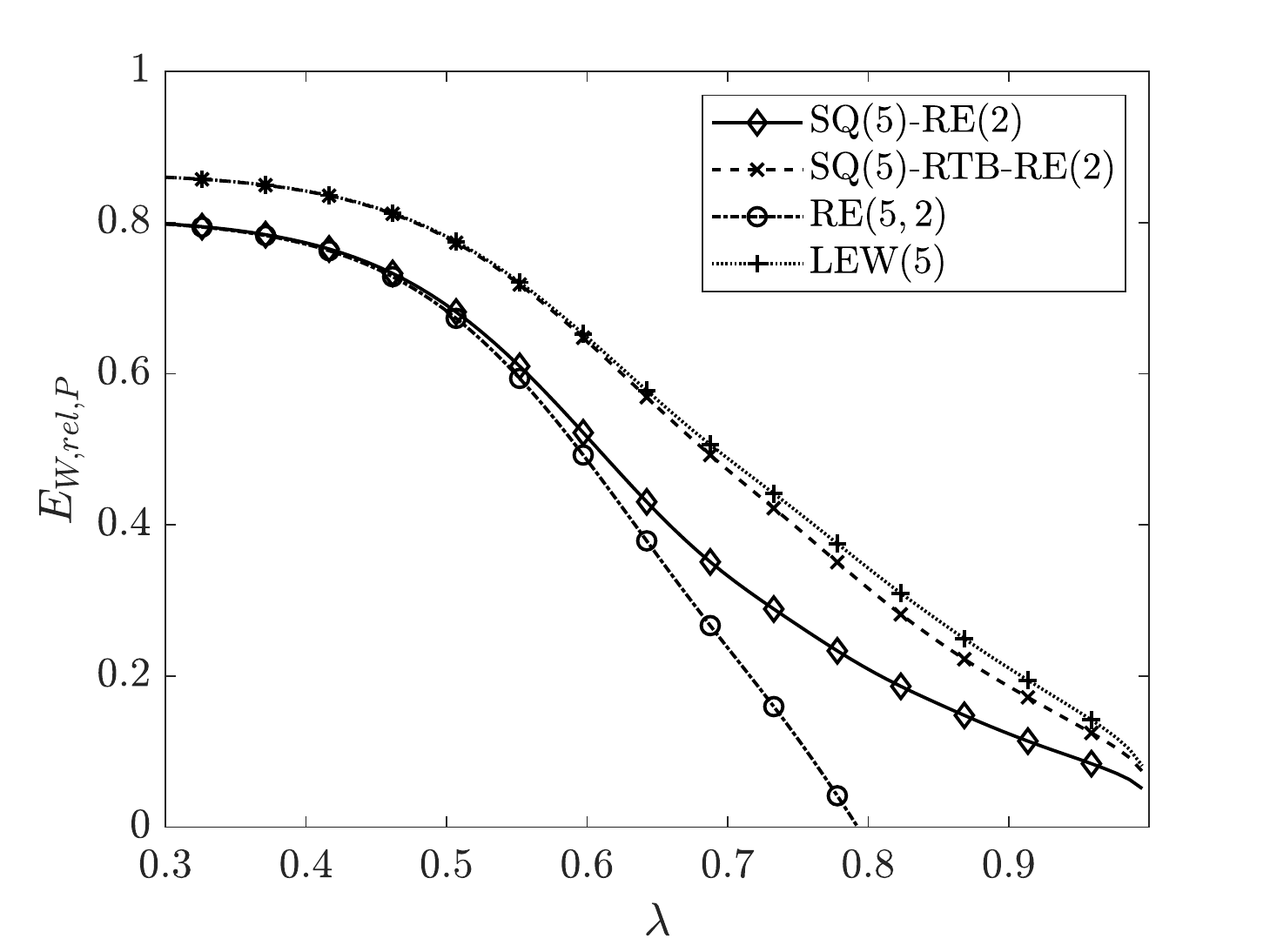}
\caption{Policies which make use of the threshold value $T=2$.}
\label{fig3b}
\end{subfigure}
\caption{Plots of the improvement in mean waiting time as a function of $\lambda$ for $d=5$, $\Delta=0.1$ and HEXP($\textbf{30},1/2$) job sizes. This figure should be compared to Figure \ref{fig2}.}
\label{fig3}
\end{figure*}
\begin{figure*}[t]
\begin{subfigure}{0.4\textwidth}
\centering
\captionsetup{width=.8\linewidth}
\includegraphics[width=1\linewidth]{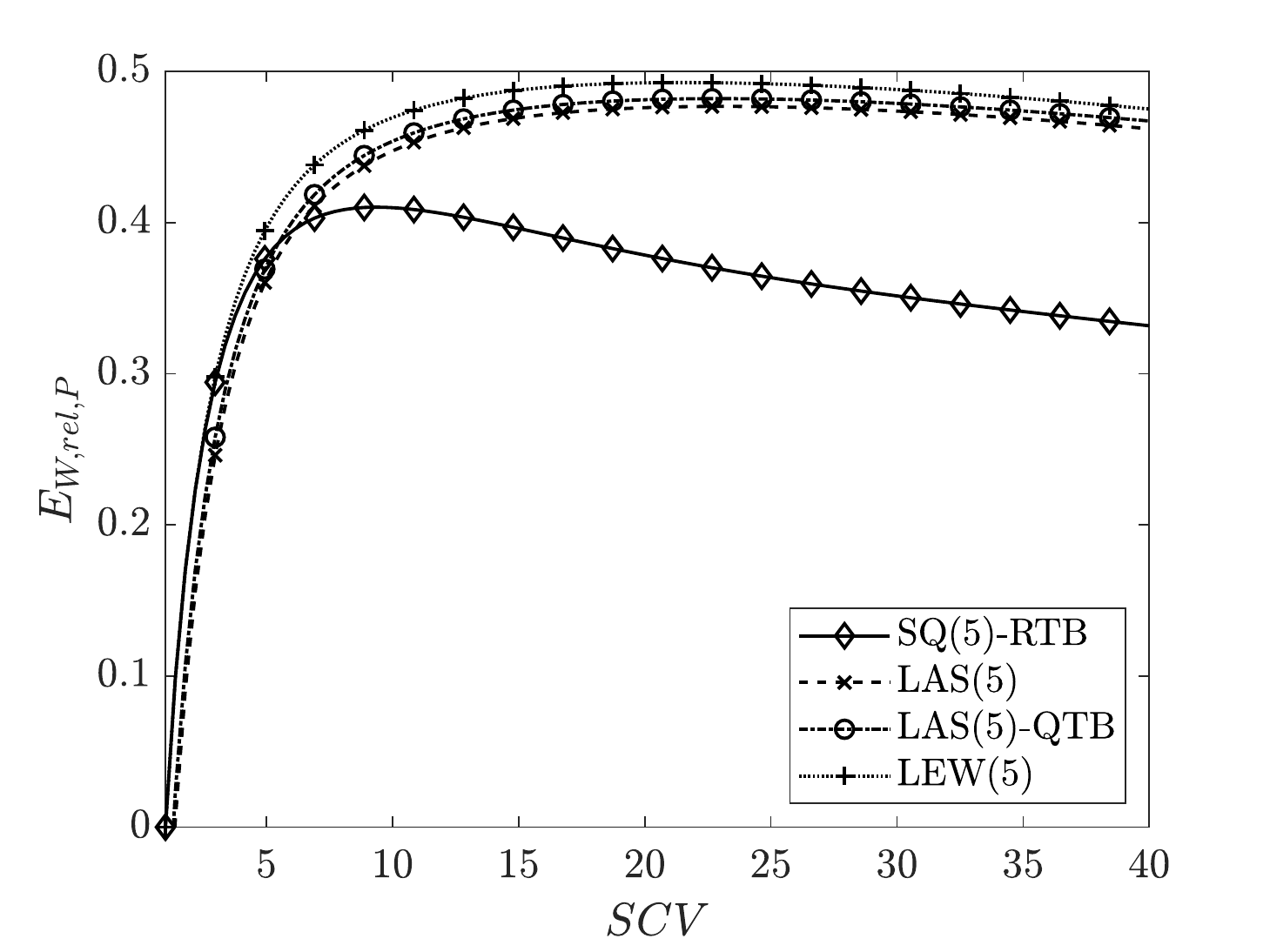}
\caption{Policies which do not make use of any threshold value.}
\label{fig4a}
\end{subfigure}
\begin{subfigure}{.4\textwidth}
\centering
\captionsetup{width=.8\linewidth}
\includegraphics[width=1\linewidth]{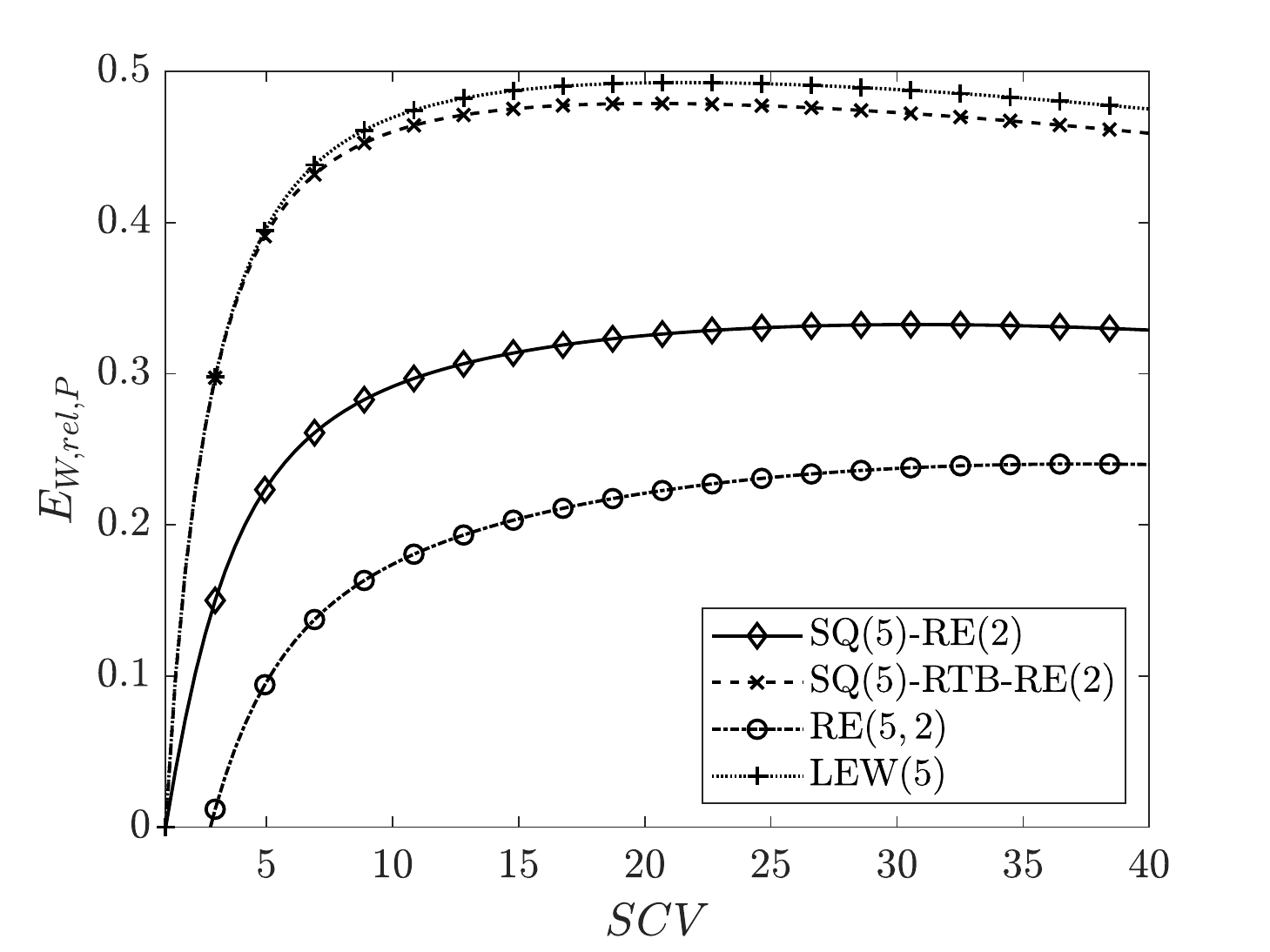}
\caption{Policies which make use of the threshold value $T=2$.}
\label{fig4b}
\end{subfigure}
\caption{Plots of the improvement in mean waiting time as a function of the $SCV$ for 
$\lambda=0.7$, $d=5$, $\Delta=0.1$ and HEXP($\textbf{SCV},1/2$) job sizes.}
\label{fig4}
\end{figure*}
In Figure \ref{fig3} we use the same parameter settings as in Figure \ref{fig2}, but change the $SCV$ from $10$ to $30$ (i.e.~we set $d=5$, $\Delta=0.1$ and $f=1/2$). As expected, increasing the SCV, increases the relative improvement made by using the attained service time information. However, the improvement is not as significant as one might expect, especially for larger values of $\lambda$. Therefore we investigate this further in Figure \ref{fig4}, where we show the relative improvement as a function of the SCV for $\lambda=0.7$, we observe that there is a clear improvement if we have a higher SCV, but only up to some point. For most policies we observe a strong improvement until $SCV \approx 15$, after which the improvement seems to flatten, or even decrease. The reason for the slight decrease is probably due to the fact that a higher SCV
also results in longer queues and similar to higher $\lambda$ values, this decreases the value
of knowing the attained service time somewhat. Additional experiments showed that for smaller $\lambda$ values the decrease as a function of the SCV occurs further on.

\subsection{Impact of the Load from Small Jobs $f$}
\begin{figure*}[t]
\begin{subfigure}{0.4\textwidth}
\centering
\captionsetup{width=.8\linewidth}
\includegraphics[width=1\linewidth]{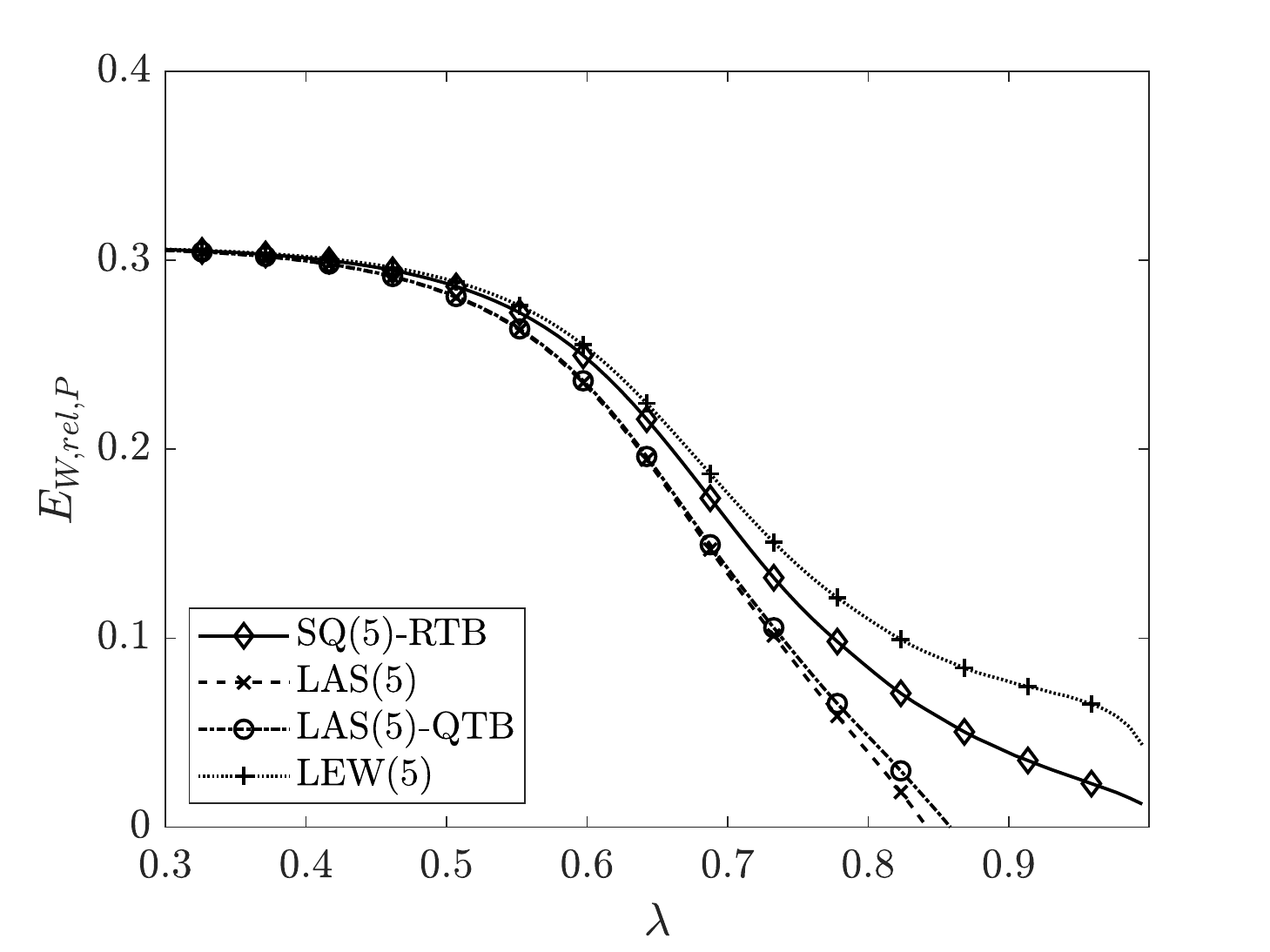}
\caption{Policies which do not make use of any threshold value.}
\label{fig8a}
\end{subfigure}
\begin{subfigure}{.4\textwidth}
\centering
\captionsetup{width=.8\linewidth}
\includegraphics[width=1\linewidth]{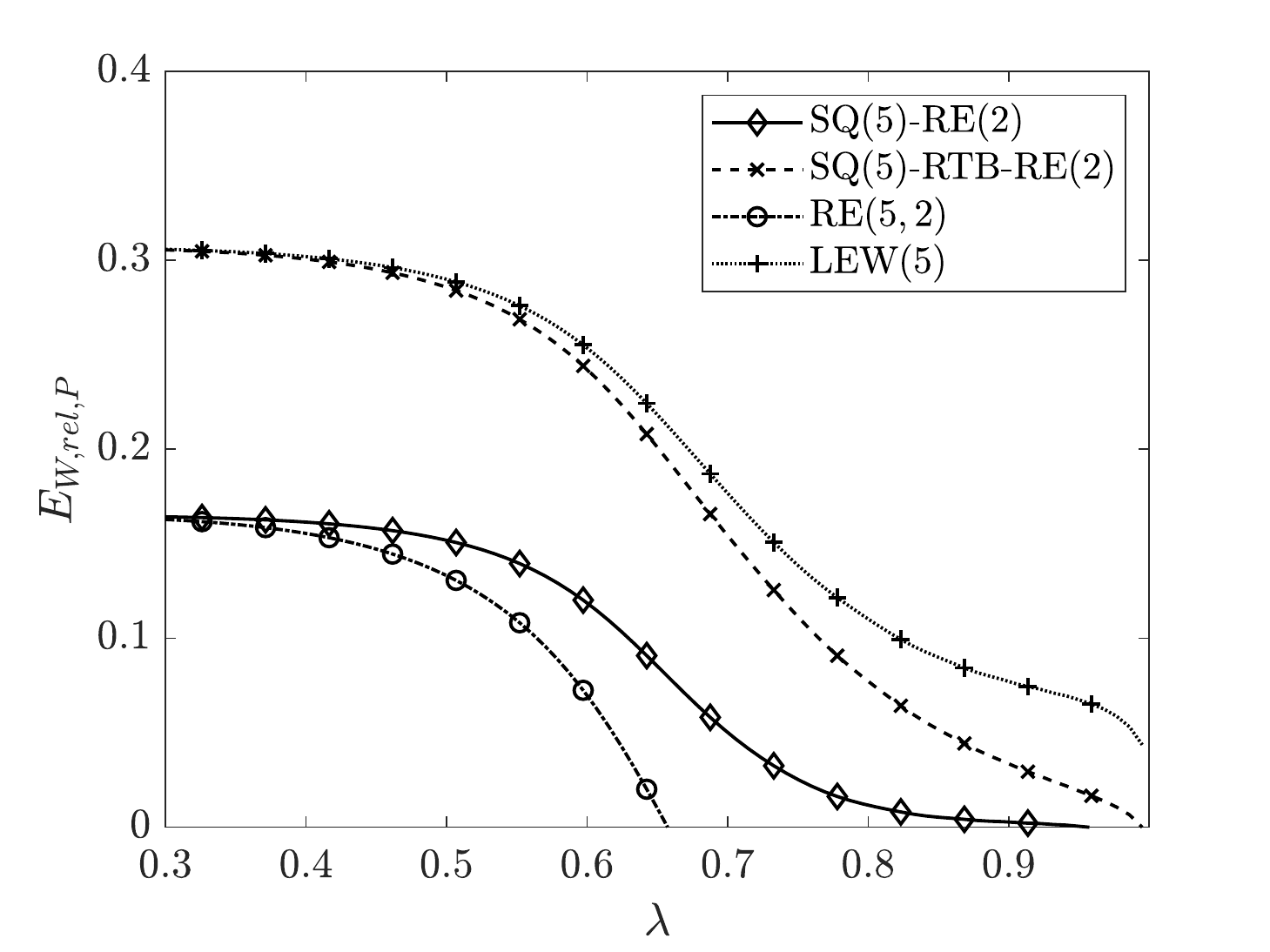}
\caption{Policies which make use of the threshold value $T=2$.}
\label{fig8b}
\end{subfigure}
\caption{Plots of the improvement in mean waiting time as a function of $\lambda$ for $d=5$, $\Delta=0.1$ and HEXP($10,\textbf{1/10}$) job sizes. This figure should be compared to Figure \ref{fig2}.}
\label{fig8}
\end{figure*}
In Figure \ref{fig8} we repeat the experiment in Figure \ref{fig2}, but we replace the value of $f$ by $f=1/10$. We observe that, somewhat surprisingly, having a small value for $f$ appears to decrease the gain from using the attained service time information. This may be explained by the fact that, as the value of $f$ decreases, the probability that all chosen servers are working on a large job increases. In particular for our example there is a probability around $(9/10)^5 \approx 59\%$ that all chosen servers are currently working on a large job (given that all chosen servers are currently busy). That is, in $59\%$ of cases, our policies reduce to either random routing or the standard SQ($d$) policy. We emphasize this point using Figure \ref{fig9}, where we show the performance of our policies as a function of $f$, we clearly see how the improvement reaches a peak around $f \approx 2/3$. The performance decreases sharply as $f\approx 0$ and $f \approx 1$, in the first case, (almost) all servers are working on large jobs, turning our policies into random routing resp.~SQ($d$) while for $f\approx 1$, almost all servers are working on short jobs, which again turns our policies into random routing resp.~SQ($d$).

\begin{figure*}[t]
\begin{subfigure}{0.4\textwidth}
\centering
\captionsetup{width=.8\linewidth}
\includegraphics[width=1\linewidth]{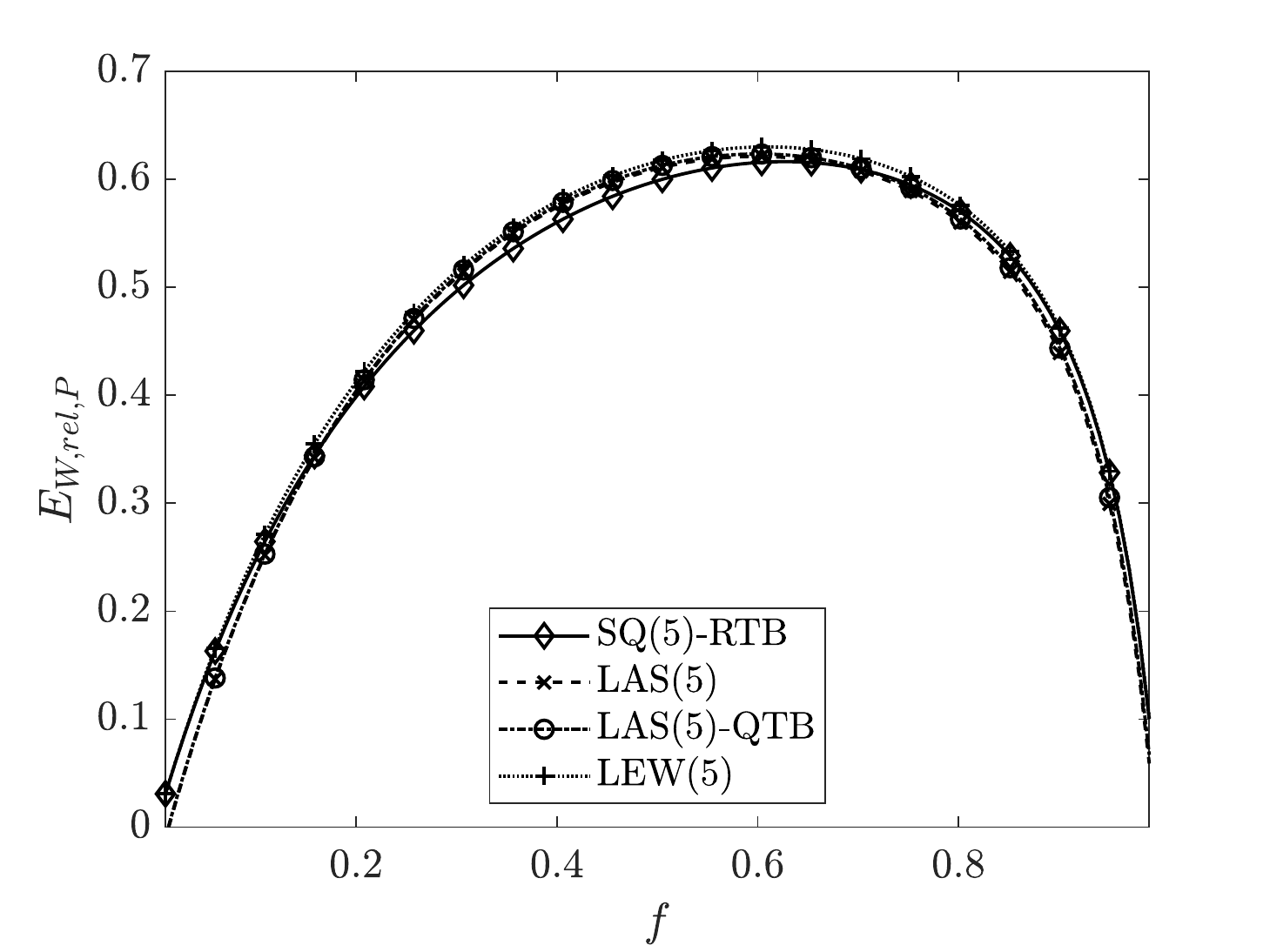}
\caption{Policies which do not make use of any threshold value.}
\label{fig9a}
\end{subfigure}
\begin{subfigure}{.4\textwidth}
\centering
\captionsetup{width=.8\linewidth}
\includegraphics[width=1\linewidth]{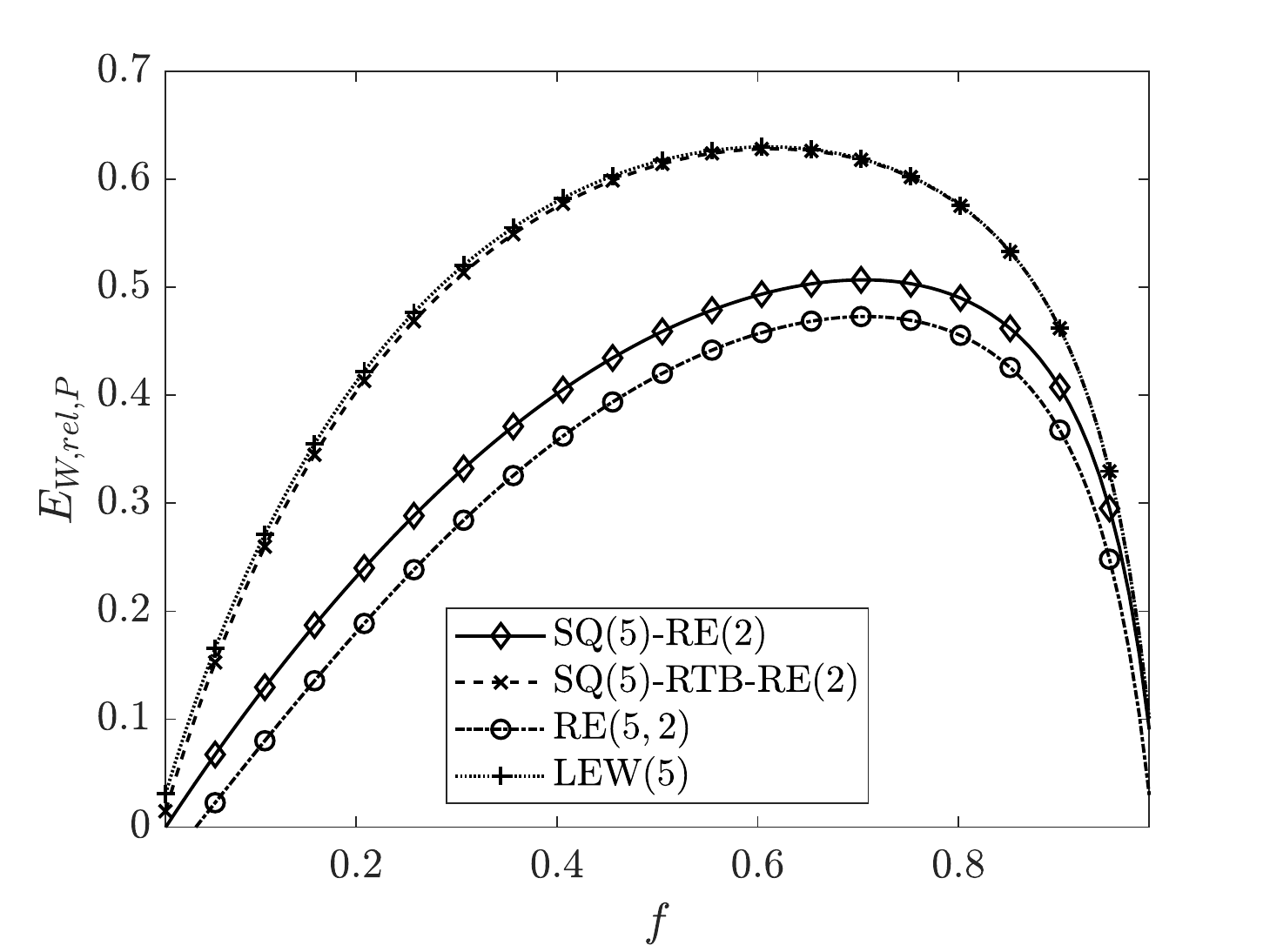}
\caption{Policies which make use of the threshold value $T=2$.}
\label{fig9b}
\end{subfigure}
\caption{Plots of the improvement in mean waiting time as a function of $f$ for $\lambda=0.6$, $d=5$, $\Delta=0.1$ and HEXP($10,\textbf{f}$) job sizes. This figure should be viewed as a supplement of Figure \ref{fig8}.}
\label{fig9}
\end{figure*}

\subsection{Tail of the Waiting Time Distribution}
\begin{figure*}[t]
\begin{subfigure}{0.4\textwidth}
\centering
\captionsetup{width=.8\linewidth}
\includegraphics[width=1\linewidth]{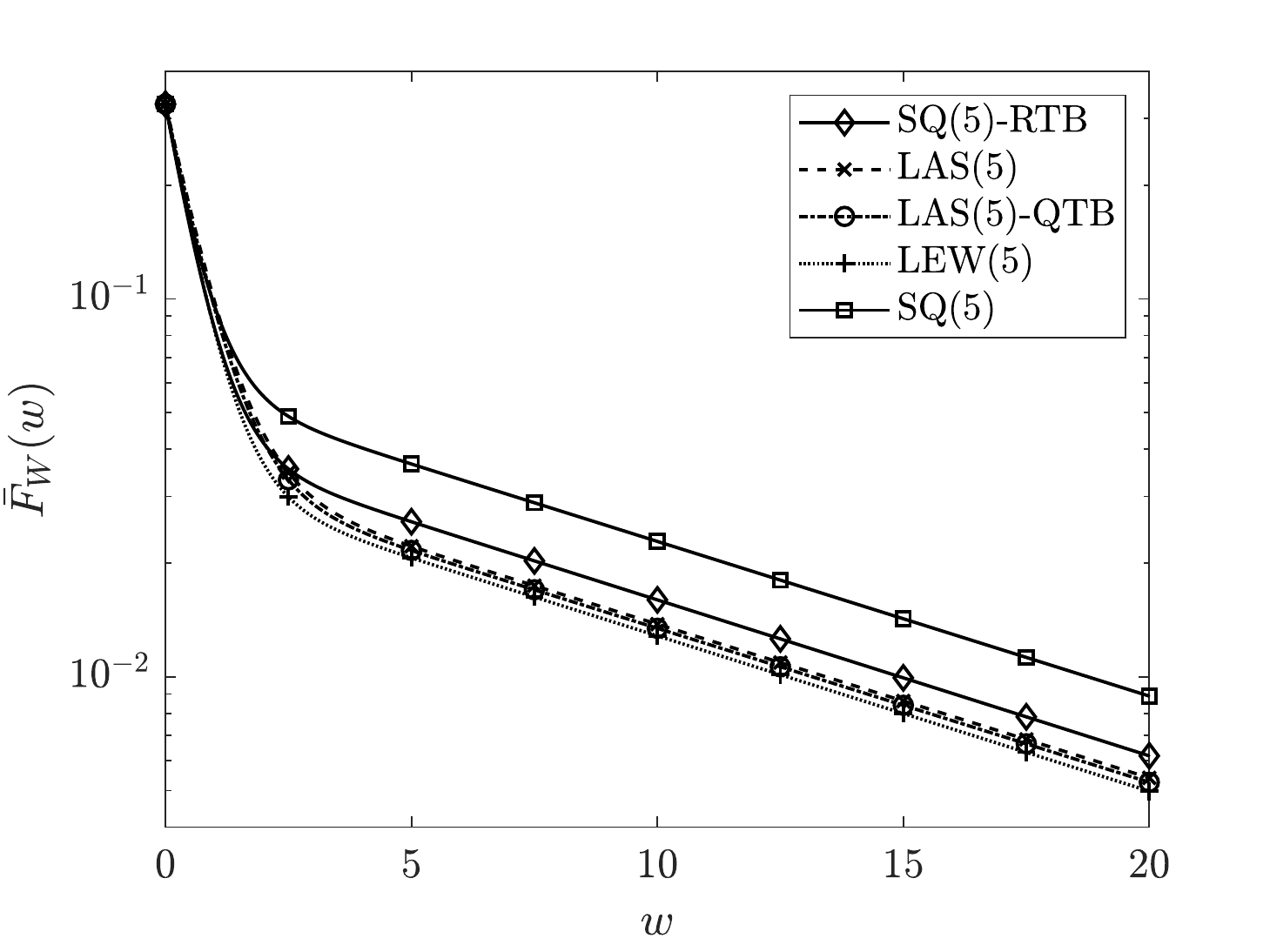}
\caption{Policies which do not make use of any threshold value.}
\label{fig10a}
\end{subfigure}
\begin{subfigure}{.4\textwidth}
\centering
\captionsetup{width=.8\linewidth}
\includegraphics[width=1\linewidth]{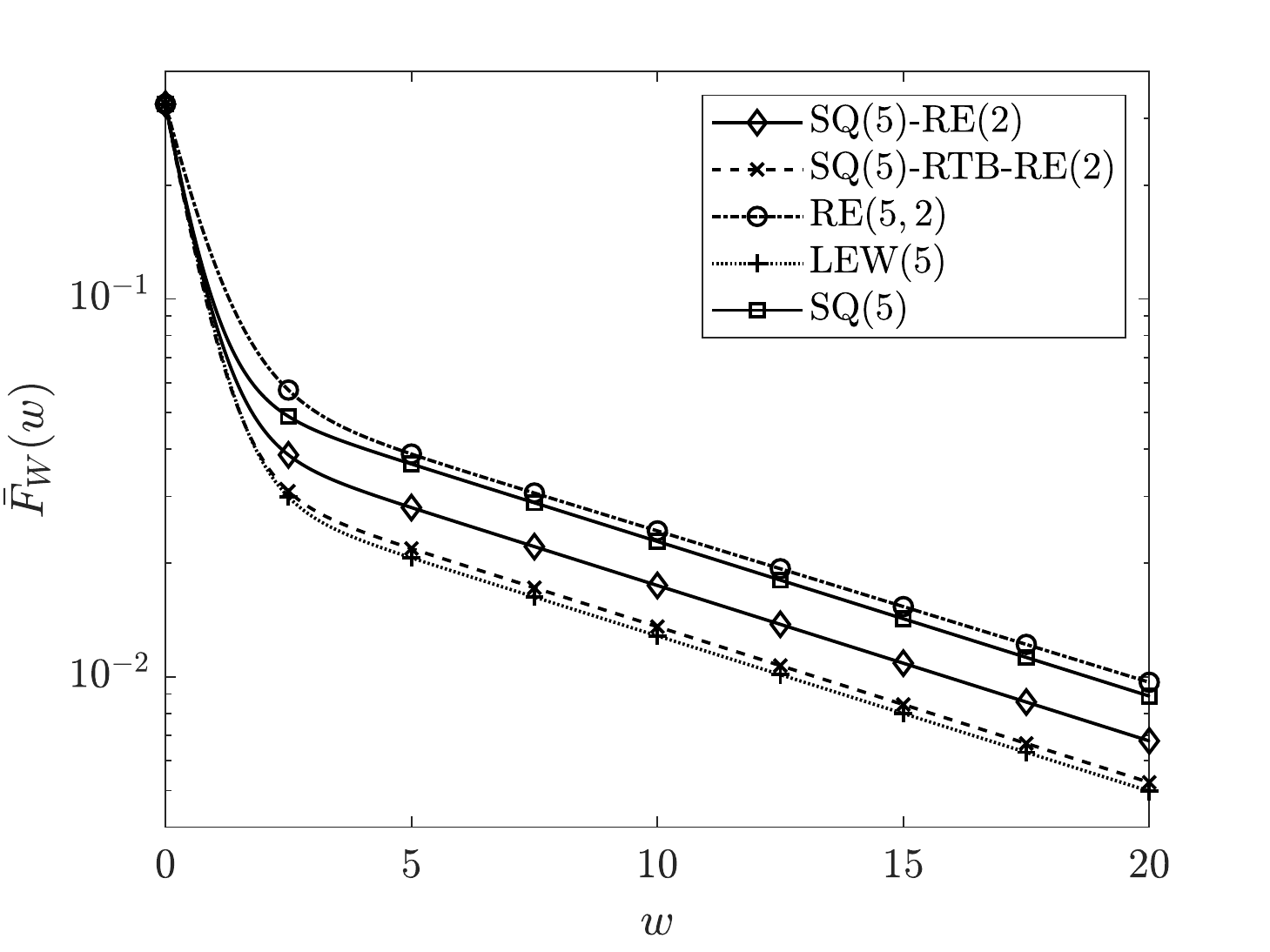}
\caption{Policies which make use of the threshold value $T=2$.}
\label{fig10b}
\end{subfigure}
\caption{Plots of the tails of the waiting time distributions $\bar F_W(w)=\mathbb{P}\{W \geq w\}$ for $\lambda=0.8$, $\Delta=0.1$, $d=5$ and HEXP($10,1/2$) job sizes. This figure should be seen as a supplement to Figure \ref{fig2}.}
\label{fig10}
\end{figure*}
In this section we take a closer look at the tail of the workload distribution $\bar F_W(w) = \mathbb{P}\{W \geq w\}$. In Figure \ref{fig10} we show $\bar F_W(w)$ as a function of $w$ for the same setting as in Figure \ref{fig2} with $\lambda = 0.8$ (i.e.~$d=5$, $T=2$ and HEXP($10,1/2$) job sizes). We observe that the tail behaviour is identical for all policies (and is the same as the tail of SQ($5$)). This result is not unexpected as the tail is heavily influenced by the
job size distribution of the long jobs.
This entails that studying the tail behaviour does not add much value to our discussion, therefore we keep our focus on the mean waiting time.
\subsection{The Granulity $\Delta$}
We have always used the same granularity $\Delta=0.1$, one could argue that in a real system the servers have more fine grained information on the attained service time of the job at the head of its queue. In Figure \ref{fig11a} we consider the same setting as in Figure \ref{fig2} ($d=5, T=2$ and HEXP($10,1/2$) job sizes), but with $\Delta=0.01$. We do not generate the plots associated to SQ($5$)-RE($2$) and RE($5, 2$) as these policies are independent of the granularity (the server simply states whether or not it has exceeded the threshold $T$). 
%\textcolor{red}{Dit is striktgenomen niet waar natuurlijk, hangt er wel een piepklein beetje vanaf kan je zeggen want gaat pas merken dat je over threshold gaat na tijd $n\Delta \geq T$ (met $n$ minimaal). Langs de andere kant wordt dit systeem misschien anders geimplementeerd in het echt, in de code heb ik de threshold policies altijd gewoon gedaan met $c_1 = T$ en $r=1$.} 
With the exception of LAS($5$)-QTB, we can hardly spot a difference with the plots in Figure \ref{fig2}. The exception for LAS($5$)-QTB can be explained by noting that
a smaller granularity $\Delta$ implies fewer ties in the reported $k$ value, meaning the queue length information
is neglected more often and as $\Delta$ tends to zero, the performance of LAS($5$)-QTB converges to that of LAS($5$) (while for a very large granularity its performance resembles SQ($d$)-RE($T$)). We confirm our findings in Figure \ref{fig11b}, where we repeat the plots made in Figure \ref{fig3}, but with $\Delta=0.01$ rather than $\Delta=0.1$ (i.e.~$d=5, T=2$ and HEXP($30, 1/2$) job sizes).
\begin{figure*}[t]
\begin{subfigure}{0.4\textwidth}
\centering
\captionsetup{width=.8\linewidth}
\includegraphics[width=1\linewidth]{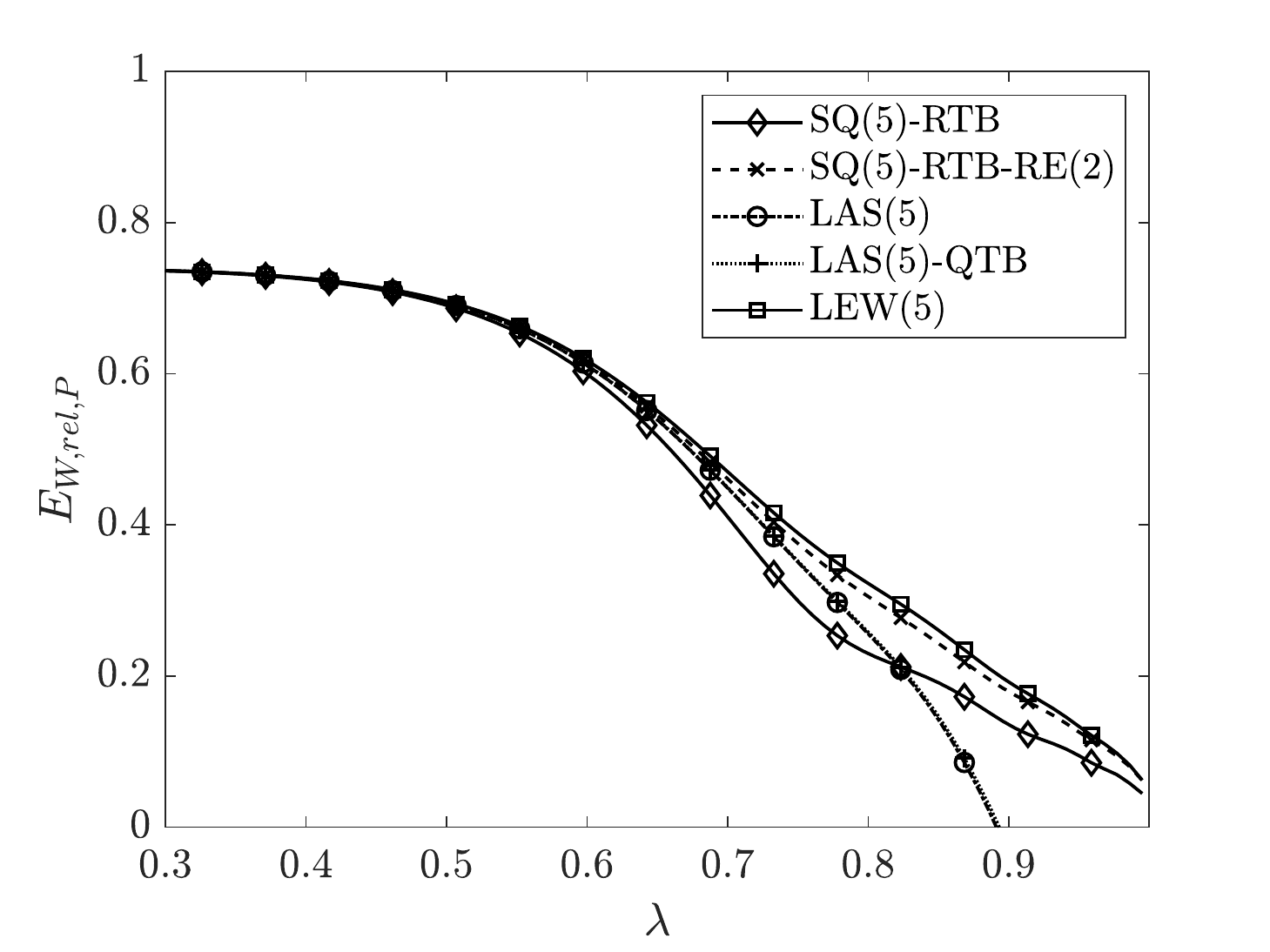}
\caption{For this plot, we set $SCV=10$, as such it should be compared to Figure \ref{fig2}.}
\label{fig11a}
\end{subfigure}
\begin{subfigure}{.4\textwidth}
\centering
\captionsetup{width=.8\linewidth}
\includegraphics[width=1\linewidth]{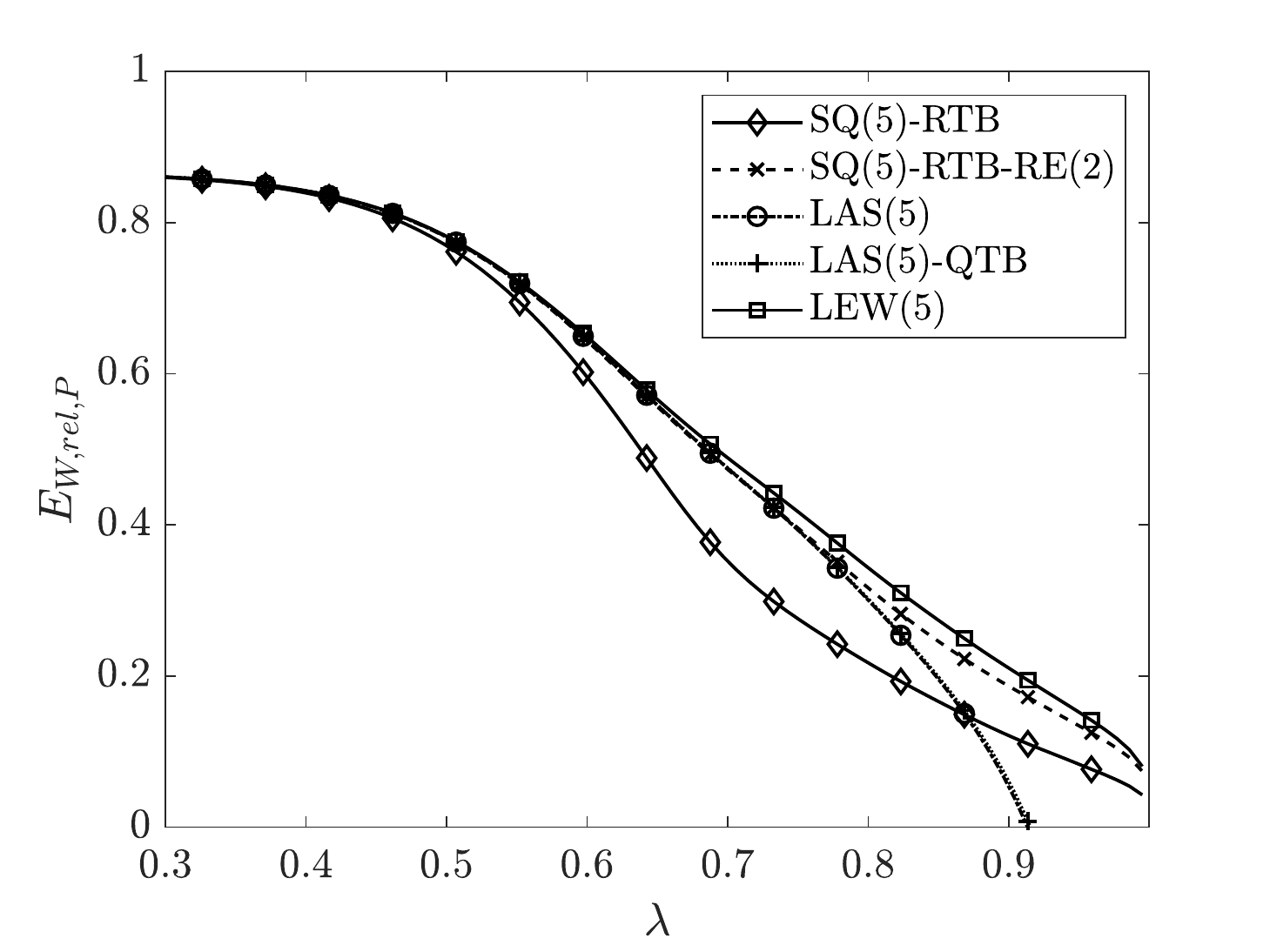}
\caption{For this plot, we set $SCV=30$, as such it should be compared to Figure \ref{fig3}.}
\label{fig11b}
\end{subfigure}
\caption{Plots of the improvement in mean waiting time as a function of $\lambda$ for $d=5$, $\boldsymbol \Delta$\textbf{=0.01} and HEXP($SCV, 1/2$) job sizes.}
\label{fig11}
\end{figure*}

\subsection{Choice of the threshold $T$} \label{sec:num_threshold}
\begin{figure*}[t]
\begin{subfigure}{0.4\textwidth}
\centering
\captionsetup{width=.8\linewidth}
\includegraphics[width=1\linewidth]{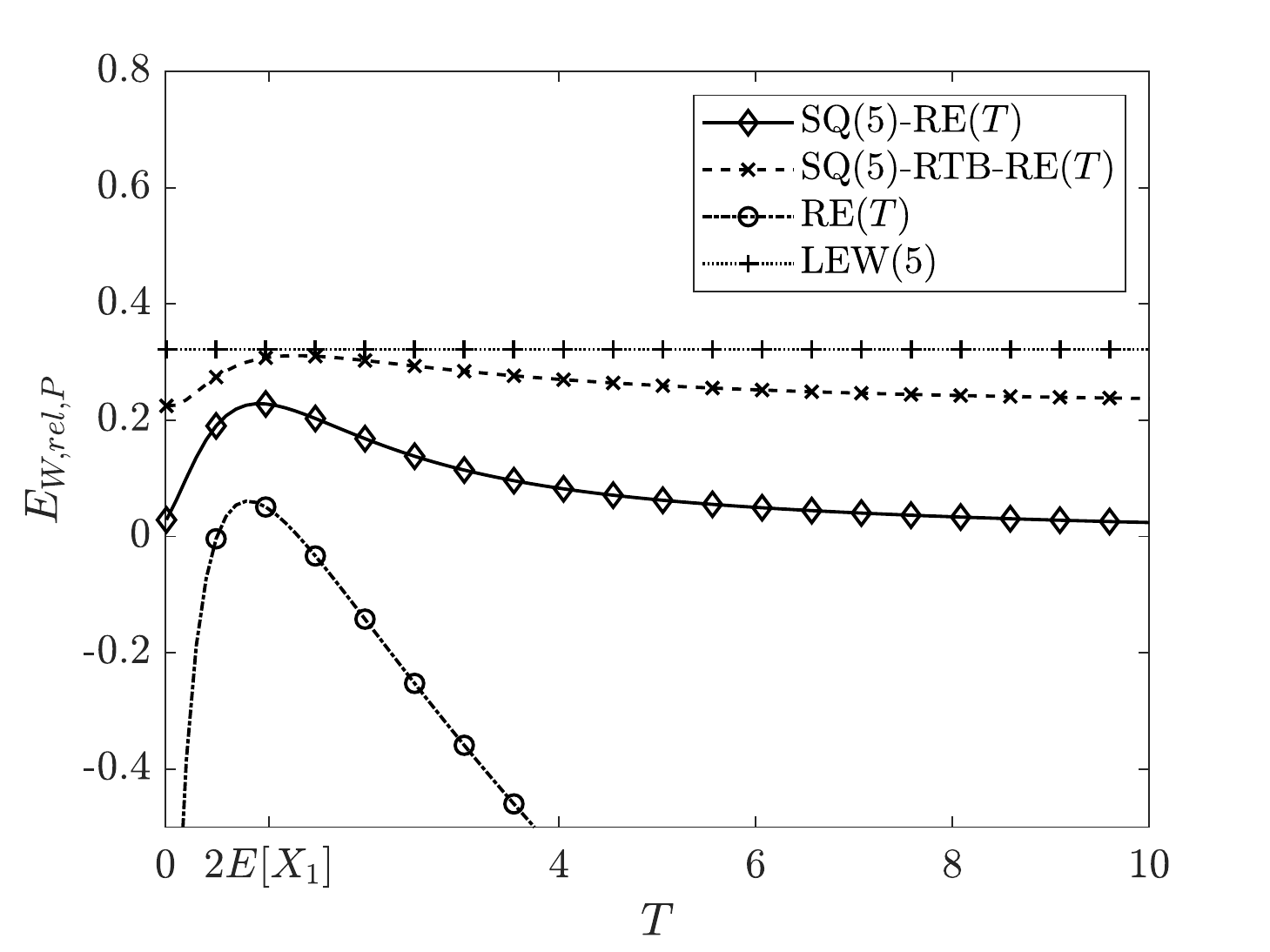}
\caption{For this plot, we set $f=1/2$, as such it should be compared to Figure \ref{fig2b}.}
\label{fig12a}
\end{subfigure}
\begin{subfigure}{.4\textwidth}
\centering
\captionsetup{width=.8\linewidth}
\includegraphics[width=1\linewidth]{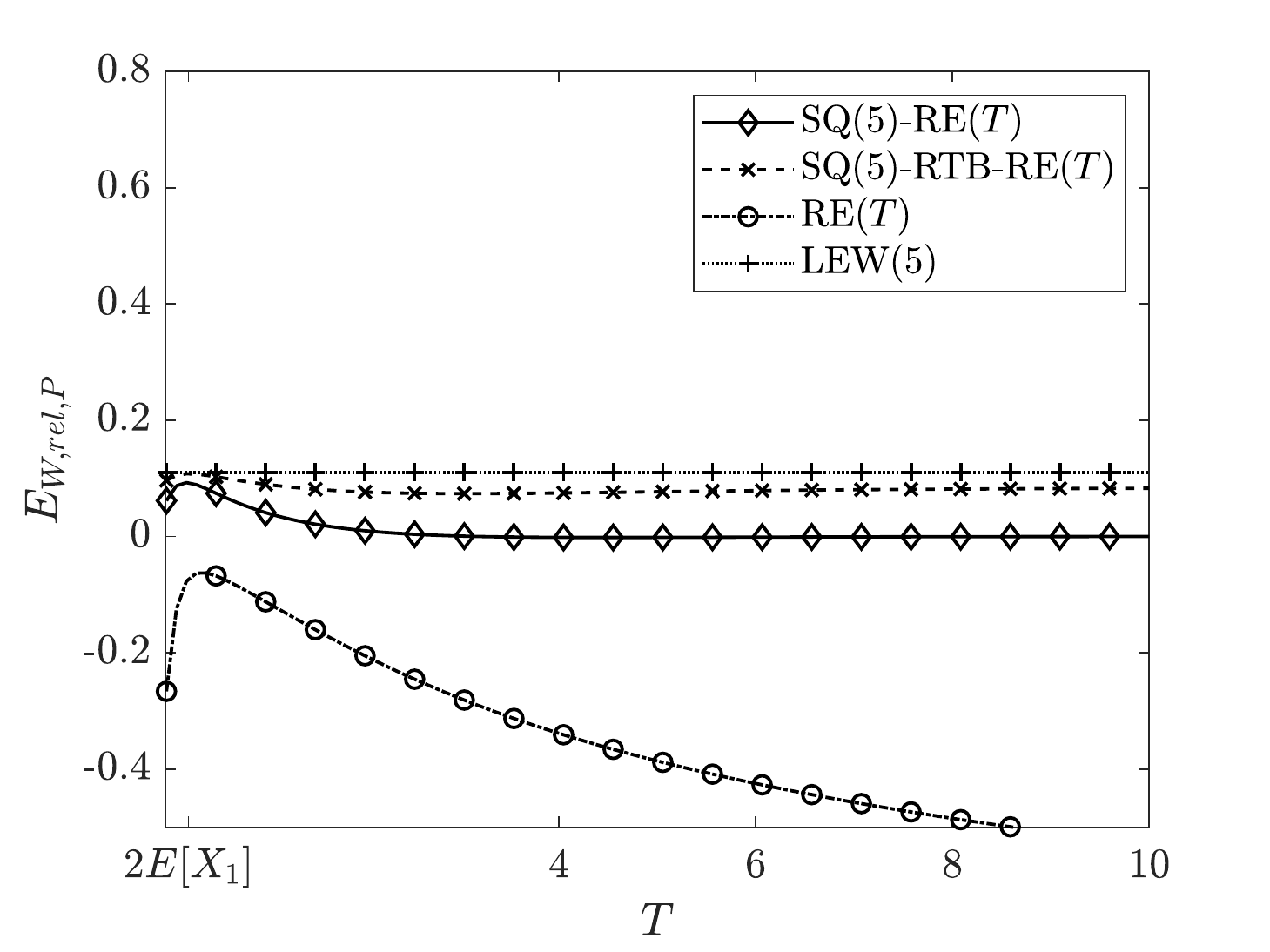}
\caption{For this plot, we set $f=1/10$, as such it should be compared to Figure \ref{fig8b}.}
\label{fig12b}
\end{subfigure}
\caption{Plots of the improvement in mean waiting time as a function of the threshold $T$ for $\lambda=0.8$, $d=5$, $\Delta=0.1$ and HEXP($10,f$) job sizes. We denote the job size distribution of the small jobs by $X_1$.}
\label{fig12}
\end{figure*}
Thus far, we have always used a (somewhat arbitrary) threshold $T = 2 \cdot \E[X]$. With this choice we noticed that we obtain a significant improvement over the standard SQ($d$) policy. However, it would make more sense if we chose a threshold which depends on the mean of the small jobs rather than the mean of all jobs. This way, our threshold is more directly linked to the probability that a job is long given that a server has been working on it for a time $T$. We now look at the impact of $T$ on the performance of the policies which depend on a threshold value. To that end, we use the setting of Figures \ref{fig2b} and \ref{fig8b} with $\lambda=0.8$ (that is, we take $d=5$ and consider HEXP($10, f$) with $f$ equal to $1/2$ and $1/10$). Let $X_1$ denote the job size distribution for the small jobs, then we have $\E[X_1] \approx 0.53$ in the case of $f=1/2$, while $\E[X_1] \approx 0.12$ for $f=1/10$. In Figure \ref{fig12} we notice that the plots attain a maximum and this maximum appears to be close to the value $T \approx 2 \E[X_1]$.

In Figure \ref{fig13} we use this improved threshold value of $T=2\E[X_1]$ and show the improvement in mean waiting time as a function of the arrival rate $\lambda$. We observe that (comparing with Figures \ref{fig2b} and \ref{fig8b}) there is a noticeable improvement in performance. This is especially true for the case of $f=1/10$, which is quite natural as for this case we now take $T=0.24$ instead of $T=2$ while for $f=1/2$ the threshold value of $2$ is replaced by the value $T=1.06 > 0.24$. However, we note that in both cases, the improvement was still significant for all considered policies even when we were using a suboptimal threshold value.
\begin{figure*}[t]
\begin{subfigure}{0.4\textwidth}
\centering
\captionsetup{width=.8\linewidth}
\includegraphics[width=1\linewidth]{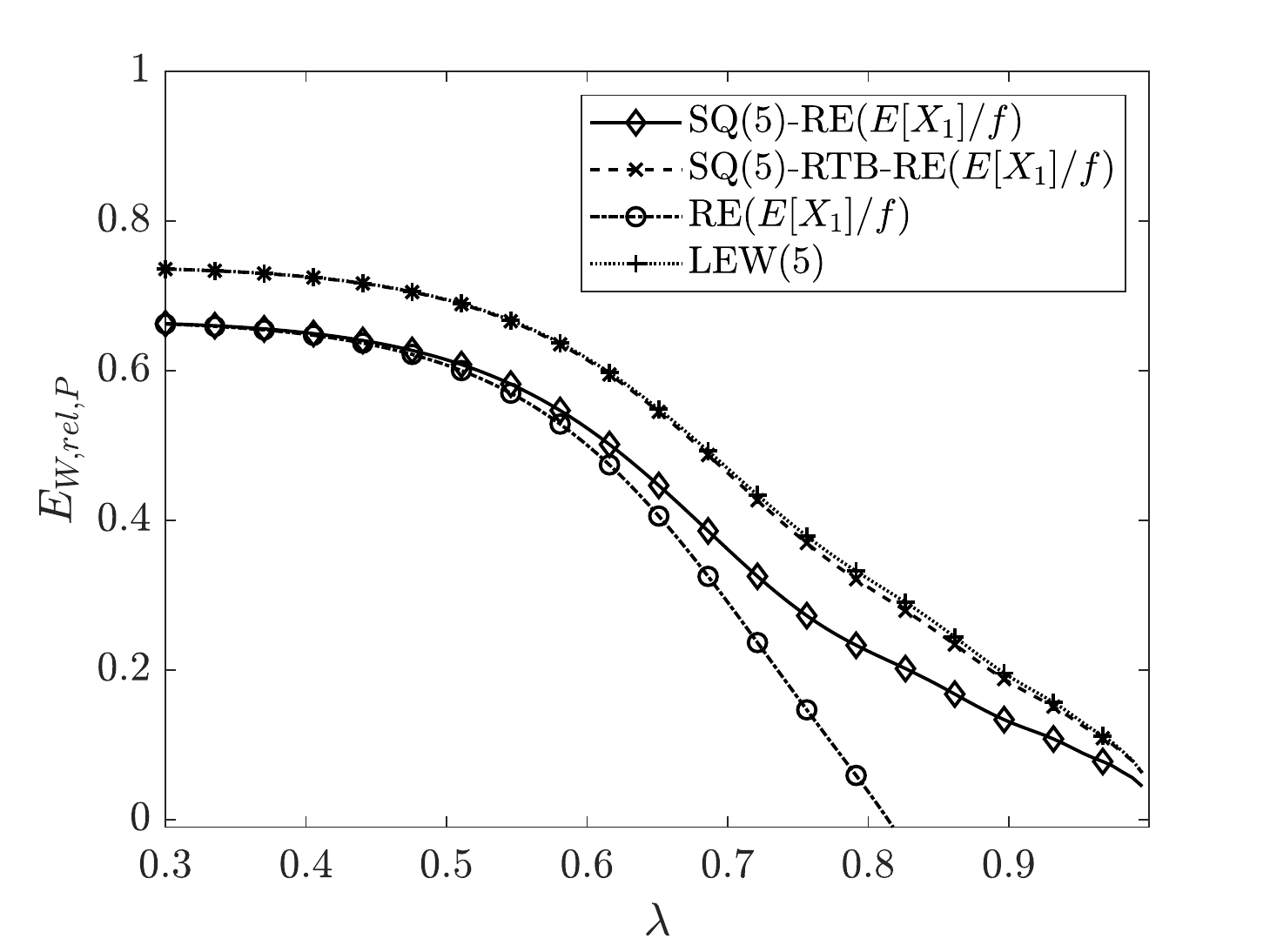}
\caption{Plot for $f=1/2$, this figure should be compared to Figure \ref{fig2b}.}
\label{fig13a}
\end{subfigure}
\begin{subfigure}{.4\textwidth}
\centering
\captionsetup{width=.8\linewidth}
\includegraphics[width=1\linewidth]{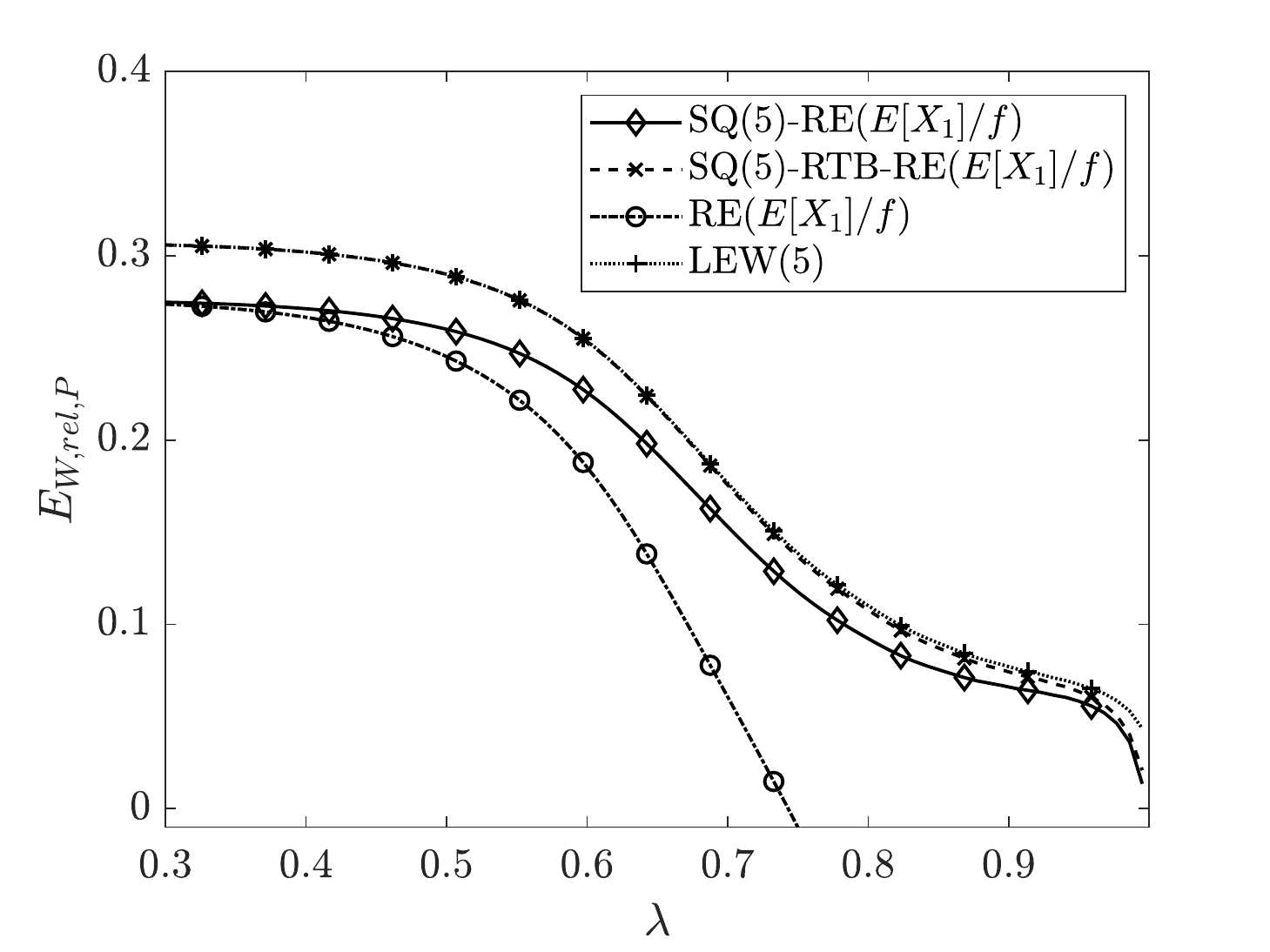}
\caption{Plot for $f=1/10$, this figure should be compared to Figure \ref{fig8b}.}
\label{fig13b}
\end{subfigure}
\caption{Plots of the improvement in mean waiting time as a function of $\lambda$ for $d=5$, $\Delta=0.1$, HEXP($10,f$) job sizes and using the improved threshold value $T = 2\cdot \E[X_1]$, with $X_1$ the job size distribution of the small jobs.}
\label{fig13}
\end{figure*}

\section{Conclusions and Future Work} \label{sec:conclusion}
In this paper we showed that the mean waiting time of the SQ($d$) policy can be significantly reduced using simple policies if the servers report the attained service time in addition to
their queue length when the workload consists of a mixture of short and long jobs. 
The attained service time is a quantity that is easy to measure in
FCFS servers.  It is even possible to achieve a sizeable improvement over SQ($d$) by simply setting a single threshold value $T$ such that a server can flag a job as large when this threshold is exceeded. While choosing a good value for $T$ may further improve performance, most policies are not too sensitive to $T$ and there is a wide range of $T$ values which 
achieve a notable performance improvement.

We did note that the improvement coming from the attained service time information 
generally decreases as the system load approaches one, as the attained service time
information becomes less valuable in the presence of long queues. However, if one were to use another scheduling policy which is aware of the attained service time of multiple jobs in its queue, larger gains may also be feasible at high load (e.g.~using Processor Sharing).

In our experiments we have focused solely on job sizes which represent a system in which both large and small jobs arrive. We 
believe however that the insights obtained in our numerical experiments extend far beyond this
class of distributions. Further, the approach developed in the paper to assess the system performance can be used for any phase-type distribution (including fittings of heavy tailed distributions).

The main idea presented in this paper can also be used in systems where there is
some form of memory at the dispatcher. For instance, the approach of
\cite{van2019hyper} could be adapted such that the dispatcher 
maintains both an upper bound on the number of jobs in each queue and a lower bound on the 
attained service time of the job at the head of each queue.

\bibliographystyle{ACM-Reference-Format}
\bibliography{thesis}

\end{document}